\newcommand{\ie}{\emph{i.e.,}\xspace}
\newcommand{\aka}{\emph{a.k.a.}\xspace}
\newcommand{\eg}{\emph{e.g.,}\xspace}
\newcommand{\etal}{\emph{et al.}\xspace}
\def\BibTeX{{\rm B\kern-.05em{\sc i\kern-.025em b}\kern-.08em
    T\kern-.1667em\lower.7ex\hbox{E}\kern-.125emX}}
\newtheorem{theorem}{Theorem}
\newtheorem{definition}{Definition}
\newtheorem{example}{Example}
\newtheorem{lemma}{Lemma}
\newtheorem{observation}{Observation}
\newtheorem{remark}{Remark}
\newcommand{\WW}{\mathrm{\mathcal{W}}}
\newcommand{\DD}{\mathrm{\mathcal{D}}}
\def\cbl{}
\def\todo{\textcolor{red}}
\def\cthe{}
\newif\ifLONGVERSION
\begin{document}

\title{{DB-LSH}: Locality-Sensitive Hashing with Query-based Dynamic Bucketing 
}

\author{\IEEEauthorblockN{Yao Tian, Xi Zhao, Xiaofang Zhou}
\IEEEauthorblockA{
The Hong Kong University of Science and Technology\\
Hong Kong SAR, China \\
\{ytianbc@cse.ust.hk, xizhao@ust.hk, zxf@cse.ust.hk\}}
}
\maketitle

\begin{abstract}
Among many solutions to the high-dimensional approximate nearest neighbor (ANN) search problem, locality sensitive hashing (LSH) is known for its sub-linear query time and robust theoretical guarantee on query accuracy. Traditional LSH methods can generate a small number of candidates quickly from hash tables but suffer from large index sizes and hash boundary problems. Recent studies to address these issues often incur extra overhead to identify eligible candidates or remove false positives, making query time no longer sub-linear. To address this dilemma, in this paper we propose a novel LSH scheme called DB-LSH 
which supports efficient ANN search for large high-dimensional datasets.
It organizes the projected spaces with multi-dimensional indexes rather than using fixed-width hash buckets. Our approach can significantly reduce the space cost by avoiding the need to maintain many hash tables for different bucket sizes. During the query phase of DB-LSH, a small number of high-quality candidates can be generated efficiently by dynamically constructing query-based hypercubic buckets with the required widths through index-based window queries. For a dataset of $n$ $d$-dimensional points with approximation ratio $c$, our rigorous theoretical analysis shows that DB-LSH achieves a smaller query cost $\bm{O(n^{\rho^*} d\log n)}$, where $\bm{\rho^*}$ is bounded by $\bm{1/c^{\alpha}}$ 
versus a bound of $\bm{1/c}$ in the existing work. An extensive range of experiments on real-world data demonstrate the superiority of DB-LSH over state-of-the-art methods on both efficiency and accuracy.
\end{abstract}

\begin{IEEEkeywords}
Locality Sensitive Hashing, Approximate Nearest Neighbor Search, High-Dimensional Spaces
\end{IEEEkeywords}

\section{Introduction}
The nearest neighbor (NN) search finds the closest point in a point dataset to a given query point. As the points which are closer to each other can often be considered `similar' to each other in many applications when a proper distance measure is used, this search operation plays a vital role in a wide range of areas, such as pattern recognition\cite{DBLP:journals/scientometrics/AbdulhayogluT18}, information retrieval\cite{DBLP:conf/recsys/WinecoffBCWG19}, and data mining\cite{DBLP:conf/www/HeLZNHC17}.  However, it is well known that finding the exact NN in large-scale high-dimensional datasets can be very time-consuming. People often conduct approximate nearest neighbor (ANN) searches instead\cite{DBLP:conf/stoc/IndykM98,DBLP:conf/sigmod/TaoYSK09}. The $c$-approximate nearest neighbor ($c$-ANN) search and $(r, c)$-nearest neighbor ($(r, c)$-NN) search are two representative queries to trade result accuracy for query efficiency. Specifically,  $c$-ANN search aims to find a point whose distance to the query point $q$ is bounded by $cr^*$, where $r^*$ is the distance from $q$ to its exact NN and $c$ is a given approximation ratio  (see Definition \ref{def:cANN}, Section \ref{preliminaries}). $(r, c)$-NN search can be considered as a decision version of $c$-ANN, which aims to determine whether there exists a point whose distance to  $q$ is at most $cr$, where $r$ is a given search range (see Definition \ref{def:rcNN}, Section \ref{preliminaries}).

\begin{table*}[t]
	\centering
	\caption{Comparison of Typical LSH Methods}
	\label{tb:comp_ATF_alg}
	\resizebox{.9\textwidth}{!}
	{
	\begin{tabular}{|c|c|c|c|c|c|c|}
		\hline
		\multicolumn{2}{|c|}{\textbf{Algorithms}}                    & \textbf{Indexing} & \textbf{Query}& \textbf{Index Size} & \textbf{Query Cost}  & \textbf{Comment} \\
		\hline
		\multicolumn{1}{|c|}{\multirow{3}{*}{\textbf{KL}}} 
		&\textbf{DB-LSH}& \textbf{Dynamic}
		&  \textbf{Query-centric} &  \bm{$O(n^{1+\rho^*}d\log n)$} & \bm{$O(n^{\rho^*}d\log n)$} &  \bm{$\rho^*\le 1/c^{\alpha}$}\\
		\cline{2-7}
		\multicolumn{1}{|c|}{} &E2LSH \cite{e2lsh} & Static & Query-oblivious &  $O(Mn^{1+\rho}d\log n)$ & $O(n^{\rho}d\log n)$ &  $\rho\le 1/c$\\
		\cline{2-7}
		\multicolumn{1}{|c|}{} &LSB-Forest \cite{DBLP:conf/sigmod/TaoYSK09}& Static & Query-oblivious &  $O(n^{1+\rho}d\log n)$ & $O(n^{\rho}d\log n)$ &  $\rho\le 1/c,c\ge2$\\
		\hline
		\multicolumn{1}{|c|}{\multirow{3}{*}{\textbf{C2}}} &QALSH \cite{DBLP:journals/pvldb/HuangFZFN15} & Dynamic & Query-centric &  $O(nK)$ & $O(nK+d)$ &  $K=O(\log n)$\\
		\cline{2-7}
		\multicolumn{1}{|c|}{} &VHP \cite{DBLP:journals/pvldb/LuWWK20}& Dynamic & Query-centric &  $O(nK)$ & $O(n(K+d))$ &  $K=O(1)$\\
		\cline{2-7}
		\multicolumn{1}{|c|}{} &R2LSH \cite{DBLP:conf/icde/LuK20}& Dynamic & Query-centric &  $O(nK)$ & $O(n(K+d))$ &  $K=O(1)$\\
		\hline
		\multicolumn{1}{|c|}{\multirow{2}{*}{\textbf{MQ}}} & SRS \cite{DBLP:journals/pvldb/SunWQZL14}& Dynamic & Query-centric &  $O(n)$ & $O(\beta n(\log n+d))$ &  $\beta\ll 1$\\
		\cline{2-7}
		\multicolumn{1}{|c|}{} &PM-LSH \cite{DBLP:journals/pvldb/ZhengZWHLJ20}& Dynamic & Query-centric &  $O(n)$ & $O(\beta nd)$ &  $\beta\ll 1$\\
		\hline
	\end{tabular}}
\end{table*}

Locality-Sensitive Hashing (LSH) \cite{DBLP:conf/sigmod/GanFFN12,DBLP:conf/sigmod/GaoJLO14,DBLP:conf/sigmod/TaoYSK09,DBLP:conf/sigmod/ZhengGTW16,DBLP:conf/vldb/GionisIM99,e2lsh} is one of the most popular tools for computing $c$-ANN in high-dimensional spaces. 
LSH maps data points into buckets using a set of hash functions such that nearby points in the original space have a higher probability to be hashed into the same bucket than those which are far away. When a query arrives, the probability to find its $c$-ANN is guaranteed to be sufficiently high by only checking the points in the bucket where the query point falls in. 
In order to achieve this goal, 
the original LSH-based methods (E2LSH) \cite{e2lsh} 
design a set of $K$ independent hash functions with which all data points in the original $d$-dimensional space are mapped into a $K$-dimensional space, $K\ll d$. These $K$-dimensional points are assigned into a range of buckets which are $K$-dimensional hypercubes. This process is repeated $L$ times to generate $L$  $K$-dimensional hash buckets (we term this type of approach $(K, L)$-index). Intuitively, as $K$ increases, the probability of two different points being hashed into the same bucket decreases. On the contrary, the collision probability, which is the probability of two different points being mapped into the same bucket, increases as $L$ increases because two points are considered as \cthe{a} ‘collision’ as long as they are mapped into the same bucket at least once. 
As shown in \cite{DBLP:conf/vldb/GionisIM99, DBLP:conf/compgeom/DatarIIM04}, by choosing $K=\log_{1/p_2}n$ and $L=n^\rho$, where $\rho = \frac{\ln 1/p_1}{\ln 1/p_2}$, $p_1, p_2$ are constants depending on $r$ and $c$ (for the meaning of $p_1$ and $p_2$, see Definition \ref{def:LSH}, Section \ref{preliminaries}), E2LSH can solve \cbl{the} $(r, c)$-NN problem in sub-linear time $O(n^\rho d\log n)$ with constant success probability of $1/2-1/e$. Accordingly, E2LSH finds $c$-ANN in sub-linear time by answering a series of $(r,c)$-NN queries with $r = 1, c, c^2, \ldots$. However, to achieve a good accuracy, E2LSH needs to prepare a $(K, L)$-index for each $(r, c)$-NN and $L$ is typically large, which causes prohibitively large storage costs for the indexes. LSB \cite{DBLP:conf/sigmod/TaoYSK09} alleviates this issue by building a $(K, L)$-index for $(1, c)$-NN and 
repeatedly merging small hash buckets into a large one, which effectively enlarges $r$. However, LSB only works for $(r,c)$-NN queries at some discrete integer $r$, which imposes the limitation that LSB cannot answer \cthe{the} $c$-ANN query with $c < 4$. C2LSH \cite{DBLP:conf/sigmod/GanFFN12} proposes a new LSH scheme called \textit{collision counting} (C2). By relaxing the collision condition from the exactly $K$ collisions to any $l$ collisions where $l< K$ is a given value, C2LSH only needs to maintain $K$ one-dimensional hash tables (instead of $L$ $K$-dimensional hash tables). However, the query cost of C2 is no longer sub-linear \cite{DBLP:conf/sigmod/GanFFN12} because it is expensive to count the number of collisions between a large number of data points and the query point dimension by dimension.

In addition to the dilemma between space and time, the above methods also suffer from the candidate quality issue (\aka the hash boundary issue). That is, no matter how large the hash buckets are, some points close to a query point may still be partitioned into different buckets. Several dynamic bucketing techniques are proposed to address this issue. The main idea of dynamic bucketing is to leave the bucketing process to the query phase in the hope of generating buckets such that the nearby points are more likely to be in the same bucket as the query point. The C2 approach is extended to dynamic scenarios by using B$^+$-trees to locate points falling in a query-centric bucket in each dimension \cite{DBLP:journals/pvldb/HuangFZFN15,DBLP:journals/pvldb/LuWWK20,DBLP:conf/icde/LuK20}, at the cost of increased query time because of a large number of one-dimensional searches.
\cite{DBLP:journals/pvldb/SunWQZL14, DBLP:journals/pvldb/ZhengZWHLJ20} explore a new dynamic metric query (MQ) based LSH scheme to map data points in a high-dimensional space into a low-dimensional projected space via $K$ independent LSH functions, and determine $c$-ANN by exact nearest neighbor searches in the projected space. However, even in a low-dimensional space, finding the exact NN is still inherently computationally expensive. More importantly, at least $\beta n$ exact distance computations are needed to perform in case of missing correct $c$-ANN, which incurs a linear time complexity. Here $\beta$ is an estimated ratio for 
the number of $K$ dimensional NN searches such that the $d$ dimensional ANN results can be found safely \cite{DBLP:journals/pvldb/SunWQZL14,DBLP:journals/pvldb/ZhengZWHLJ20}. 

Table 1 summarizes \cbl{the} query and space costs of typical LSH methods. As shown in the table, among the existing solutions to the $c$-ANN search problem, $(K, L)$-index based methods are the only ones that can achieve sub-linear query cost, \ie $O(n^\rho d\log n)$, where $\rho$ is proven to be bounded by $1/c$. $M$ in E2LSH is the number of $(K, L)$-indexes prepared ahead \cite{DBLP:conf/sigmod/TaoYSK09}. 
Note that the value of $\rho$ is bounded by $1/c$ only when the bucket size is very large \cite{DBLP:conf/compgeom/DatarIIM04}. This implies a very large value of $K$ is necessary to effectively differentiate points based on their distances. It remains a significant challenge to find \cbl{a} smaller and truly bounded $\rho$ without using a very large bucket size.

Motivated by the aforementioned limitations, in this paper we propose a novel $(K, L)$-index approach with a query-centric dynamic bucketing strategy called DB-LSH to solve the high-dimensional $c$-ANN search problem. DB-LSH decouples the hashing and bucketing processes of \cbl{the} $(K, L)$-index, making it possible to answer $(r,c)$-NN queries for any $r$ and $c$-ANN for any $c > 1$ with only one suit of indexes (\ie without the need to perform LSH $L$ times for each possible $r$). In this way the space cost is reduced significantly, and a reduction of $L$ value becomes possible.
DB-LSH builds dynamic query-centric buckets and conducts multi-dimensional window queries to eliminate the hash  boundary issues for selecting the candidates. Different from other query-centric methods, the region of our buckets are still multi-dimensional cubes like in static $(K, L)$-index methods, which enables DB-LSH to not only generate high-quality candidates but also to achieve sub-linear query cost, as shown in Table \ref{tb:comp_ATF_alg}.
Furthermore,  DB-LSH achieves a much smaller bound at a proper and finite bucket size, denoted as $\rho^*$, which is  bounded by $1/c^{\alpha}$  (\eg $\alpha=4.746$ when choosing $4c^2$ as the width of the initial hypercubic bucket). With theoretical analysis and an extensive range of experiments, we
show that DB-LSH outperforms the existing LSH methods significantly for both efficiency and accuracy.

The main contributions of this paper include:
\begin{itemize}
\item We propose a novel LSH framework, called DB-LSH, to solve \cthe{the} high-dimensional $c$-ANN search problem. It is the first work that combines the static $(K, L)$-index approach with a dynamic search strategy for bucketing. By taking advantages from both sides, DB-LSH can reduce the index size and improve query efficiency simultaneously.

\item A rigorous theoretical analysis shows that DB-LSH can achieve the lowest query time complexity so far for any approximation ratio $c>1$.  DB-LSH answers a $c^2$-ANN query with a constant success probability in $O(n^{\rho^*} d\log n)$ time, where $\rho^*$ is bounded by $1/c^{\alpha}$, \eg $\alpha=4.746$ when initial bucket width is $4c^2$, which is smaller than $\rho$ in other $(K,L)$-index methods. 

\item Extensive experiments on 10 real datasets with different sizes and dimensionality have been conducted to show that DB-LSH can achieve better efficiency and accuracy than the existing LSH methods.
\end{itemize}

The rest of the paper is organized as follows. The related work is reviewed in Section \ref{sec:relatedwork}. Section \ref{sec:preliminary} introduces the basic concepts and the research problem formally. The construction and query algorithms of DB-LSH are presented in Section \ref{sec:method}, with a theoretical analysis in Section \ref{sec:theory} and an experimental study in \ref{sec:experiment}. We conclude this paper in Section \ref{sec:conclusion}.

\section{Related Work} \label{sec:relatedwork}
\label{relatedwork}
LSH is originally proposed in \cite{DBLP:conf/stoc/IndykM98,DBLP:conf/vldb/GionisIM99}. Due to its simple structure, sub-linear query cost, and a rigorous quality guarantee, it has been 
a prominent approach for processing approximate nearest neighbor queries in \cthe{the} high dimensional spaces \cite{DBLP:conf/vldb/GionisIM99, DBLP:conf/compgeom/DatarIIM04, DBLP:conf/www/BawaCG05, DBLP:conf/vldb/LvJWCL07}. We give a brief overview of the existing LSH methods in this section.

\subsection{Mainstream LSH Methods}
\noindent
\textbf{$\bm{(K, L)}$-index based methods.} 
Although the basic LSH \cite{DBLP:conf/vldb/GionisIM99} is used in \cthe{the} Hamming space, $(K, L)$-index methods extend from it to provide a universal and well-adopted LSH framework for answering \cthe{the} $c$-ANN problem in other metric spaces. 
E2LSH\cite{e2lsh} is a popular $(K, L)$-index method in \cthe{the} Euclidean space and adopts the $p$-stable distribution-based function proposed in \cite{DBLP:conf/compgeom/DatarIIM04} as the LSH function. Its applications are limited by the hash boundary problem and undesirably large index sizes. These two shortcomings are shared by other $(K, L)$-index methods due to the fact that static buckets are used in these methods. To reduce index sizes, Tao et al. \cite{DBLP:conf/sigmod/TaoYSK09} consider answering $(r,c)$-NN queries at different radii via an elegant LSB-Tree framework, although it only works for $c$-ANN query with $c\ge 4$. SK-LSH \cite{DBLP:journals/pvldb/LiuCHLS14} is another approach based on the idea of static $(K, L)$-index, but proposes a novel search framework to find more candidates.

To address the limitations of static $(K, L)$-index methods, dynamic query strategies are developed to find high-quality candidates using smaller indexes. These methods can be classified into two categories as follows.

\noindent
\textbf{Collision counting based methods (C2).} 
The core idea of C2 is to generate candidates based on the collision numbers. It is proposed in C2LSH \cite{DBLP:conf/sigmod/GanFFN12}, which uses the techniques of collision counting and virtual rehashing to reduce space consumption. 
QALSH \cite{DBLP:journals/pvldb/HuangFZFN15} improves C2LSH by adopting query-aware buckets rather than static ones, which alleviates the hash boundary issue.
R2LSH \cite{DBLP:conf/icde/LuK20} improves the performance of QALSH by mapping data into multiple two-dimensional projected spaces rather than one-dimensional projected spaces as in QALSH.
VHP \cite{DBLP:journals/pvldb/LuWWK20} considers the buckets in QALSH as hyper-planes and introduces the concept of virtual hyper-sphere to achieve smaller space complexity than QALSH.
C2 can find high-quality candidates with a larger probability but its cost of finding the candidates is expensive due to the unbounded search regions, which makes all points likely to be counted once in the worst case.

\noindent
\textbf{Dynamic metric query based methods (MQ).} 
SRS \cite{DBLP:journals/pvldb/LiuCHLS14} and PM-LSH \cite{DBLP:journals/pvldb/ZhengZWHLJ20} are representative dynamic MQ approaches that map data into \cbl{a} low-dimensional projected space and determine candidates based on their Euclidean distances  via queries in the projected space. It is proven that this strategy can accurately estimate the distance between two points in high-dimensional spaces \cite{DBLP:journals/pvldb/ZhengZWHLJ20}. However, answering metric queries in the projected space is still computationally expensive and as many as $\beta n$ candidates have to be checked to ensure a success probability of $1/2-1/e$, where $\beta$ is a constant mentioned earlier. Therefore, MQ can incur a high query cost of $\beta nd$.

\subsection{Additional LSH Methods}
There are other LSH methods that come from two categories: the methods that design different hash functions and the methods that adopt alternative query strategies.
The former includes studies that aim to propose novel LSH functions in Euclidean space with smaller $\rho$ \cite{DBLP:conf/focs/AndoniI06,DBLP:conf/nips/AndoniILRS15,DBLP:conf/stoc/AndoniR15}. However, these functions are highly theoretical and difficult to use. 
The latter focuses on finding better query strategies to further reduce the query time or index size \cite{DBLP:conf/www/BawaCG05, DBLP:conf/vldb/LvJWCL07, DBLP:conf/sigmod/LeiHKT20,DBLP:conf/soda/Panigrahy06, DBLP:journals/pvldb/SatuluriP12,DBLP:conf/sigmod/ZhengGTW16, DBLP:conf/icde/LiuWZWQ19, liu2020ei,DBLP:conf/sigmod/LeiHKT20}.
LSH forest \cite{DBLP:conf/www/BawaCG05} offers each point a variable-length hash value instead of a fixed $K$ hash value as in $(K, L)$-index methods. It can improve the quality guarantee of LSH for skewed data distributions while retaining the same space consumption and query cost.
Multi-Probe LSH \cite{DBLP:conf/vldb/LvJWCL07} examines multiple hash buckets in the order of a probing sequence derived from a hash table. 
It reduces the space requirement of E2LSH at the cost of the quality guarantee.
Entropy-based LSH \cite{DBLP:conf/soda/Panigrahy06} and BayesLSH \cite{DBLP:journals/pvldb/SatuluriP12} adopt similar multi-probing strategies as in Multi-Probe LSH, but have a more rigorous theoretical analysis. Their theoretical analysis relies on a strong assumption on data distribution which can be hard to satisfy, leading to poor performance for some datasets. 
LazyLSH \cite{DBLP:conf/sigmod/ZhengGTW16} supports $c$-ANN queries in multiple $p$-norm spaces with only one suit of indexes, thus effectively reducing the space consumption. 
I-LSH \cite{DBLP:conf/icde/LiuWZWQ19} and EI-LSH \cite{liu2020ei} design a set of adaptive early termination conditions so that the query process can stop early if a good enough result is found.
Developed upon SK-LSH \cite{DBLP:journals/pvldb/LiuCHLS14} and Suffix Array \cite{DBLP:conf/soda/ManberM90}, Lei \etal \cite{DBLP:conf/sigmod/LeiHKT20} propose a dynamic concatenating search framework, LCCS-LSH, that also achieves sub-linear query time and sub-quadratic space.

Recently, researchers have adopted the LSH framework to solve other kinds of queries, such as maximum inner product search\cite{DBLP:conf/nips/Shrivastava014,DBLP:conf/icml/NeyshaburS15,DBLP:conf/kdd/HuangMFFT18,DBLP:conf/nips/YanLDCC18} and point-to-hyperplane NN search \cite{DBLP:conf/sigmod/HuangLT21} in high dimensional spaces. These examples demonstrate the superior performance and great scalability of LSH.

\section{Preliminaries} \label{sec:preliminary}
\label{preliminaries}
\cbl{In this section, we present the definition of the ANN search problem, the concepts of LSH, and an important observation.} Frequently used notations are summarized in Table \ref{tab:Commonly}.

\begin{table}[t]
    \centering
    \caption{List of Key Notations.}
    \label{tab:Commonly}
    \setlength{\tabcolsep}{7mm}{
    \begin{tabular}{|c|c|}
        \hline
        \textbf{Notation} & \textbf{Description}\\ 
        \hline 
        $\mathbb{R}^d$ & $d$-dimensional Euclidean space\\
        \hline
        $\mathcal{D}$ & The dataset\\
        \hline
        $n$ &  The cardinality of dataset\\
        \hline 
        $o$ & A data point\\ 
        \hline
        $q$ & A query point\\
        \hline 
        $\|o_1,o_2\|$ & The distance between $o_1$ and $o_2$ \\
        \hline 
        $f(x)$ & The pdf of standard normal distribution\\
        \hline
        $h(x)$ & Hash function\\
        \hline 
    \end{tabular}}
\end{table}

\subsection{Problem Definitions}
Let $\mathbb{R}^d$ be a $d$-dimensional Euclidean space, and $\|\cdot,\cdot\|$ denote the distance between points.

\begin{definition}[$c$-ANN Search]
\label{def:cANN}
Given a dataset $\mathcal{D} \subseteq  \mathbb{R}^d$, a query point $q \in \mathbb{R}^d$ and an approximation ratio $c > 1$, $c$-ANN search returns a point $o \in \mathcal{D}$ satisfying $\|q, o\| \leq c\cdot\|q, o^*\|$, where $o^*$ is the exact nearest neighbor of $q$.
\end{definition}

\begin{remark}
$(c,k)$-ANN search is a natural generalization of $c$-ANN search. It returns $k$ points, say $o_1, \ldots, o_k$ that are sorted in ascending order w.r.t. their distances to $q$, such that for $\forall o_i, i = 1, \ldots, k$, we have $ \|q, o_i\| \leq c\cdot\|q, o_i^*\|$, where $o_i^*$ is the $i$-th nearest neighbor of $q$.
\end{remark}



$(r, c)$-nearest neighbor search is often used as a subroutine when finding $c$-ANN. Following \cite{DBLP:conf/sigmod/TaoYSK09}, it is defined formally as follows:

\begin{definition} [$(r, c)$-NN Search]
\label{def:rcNN}
Given a dataset $\mathcal{D} \subseteq \mathbb{R}^d$, a query point $q \in \mathbb{R}^d$, an approximation ratio $c > 1$ and a distance $r$,  $(r, c)$-NN search returns:
\begin{itemize}
    \item [(1)] a point  $o \in \mathcal{D}$ satisfying $\|q, o\| \leq c \cdot r$, if there exists a point $o^\prime \in \mathcal{D}$ such that $\|q, o^\prime\| \leq r$;
    \item [(2)] nothing, if there is no point $o \in \mathcal{D}$ such that $\|q, o\| \leq c \cdot r$.
    \item [(3)] otherwise, the result is undefined.
\end{itemize}
\end{definition}

The result of case 3 remains undefined since case 1 and case 2 suffice to ensure the correctness of a $c$-ANN query.
By setting $r = \|q, o^*\|$, where $o^*$ is the nearest neighbor of $q$, a $c$-ANN can be found directly by answering \cbl{an} $(r, c)$-NN query. As $\|q, o^*\|$ is not known in advance,
a $c$-ANN query is processed by conducting a series of $(r, c)$-NN queries with increasing radius, \ie it begins by searching a region around $q$ using a small $r$ value. Without loss of generality, we assume $r = 1$. Then, it keeps enlarging the search radius in multiples of $c$, \ie $r = c, c^2, c^3, \ldots$ until a point is returned. In this way, as shown in \cite{DBLP:conf/stoc/IndykM98, DBLP:conf/vldb/GionisIM99,e2lsh}, a $c$-ANN query can be answered with an approximation ratio of $c^2$.

\begin{example}
\label{eg:rcNN-cANN}
Figure \ref{fig: rcNN-cANN} shows an example where $\mathcal{D}$ has 12 data points. Suppose approximation ratio $c = 1.5$. Consider \cbl{the} first $(r, c)$-NN search with $r = 1$ (\cbl{the} yellow circle). Since there is no point $o \in \mathcal{D}$ such that $\|q, o\| \leq cr=1.5$ (\cbl{the} red circle), it returns nothing. Then, consider $(r, c)$-NN with $r = c = 1.5$. Since there exists no point $o$ such that $\|q, o\| \leq r$, but $\|q, o_4\| \leq cr$ (\cbl{the} blue circle), the returned result is undefined, \ie it is correct to return either nothing or any found point, such as $o_4$. Finally, consider $(r, c)$-NN with $r = c^2 = 2.25$. Since $\|q, o_4\| \leq 2.25$, the query must return a point, which can be 
any point from $o_4, o_6, o_9, o_{11}$ as all of them satisfy $\|q, o\| \leq cr$ (\cbl{the} green circle). The above procedures also elaborate the process of answering a $c^2$-ANN query. Any point from $o_4, o_6, o_9, o_{11}$ can be considered as a result. Apparently, they are correct $c^2$-ANN results of $q$.

\begin{figure}[htbp]
  \centering
  \includegraphics[width=0.46\linewidth]{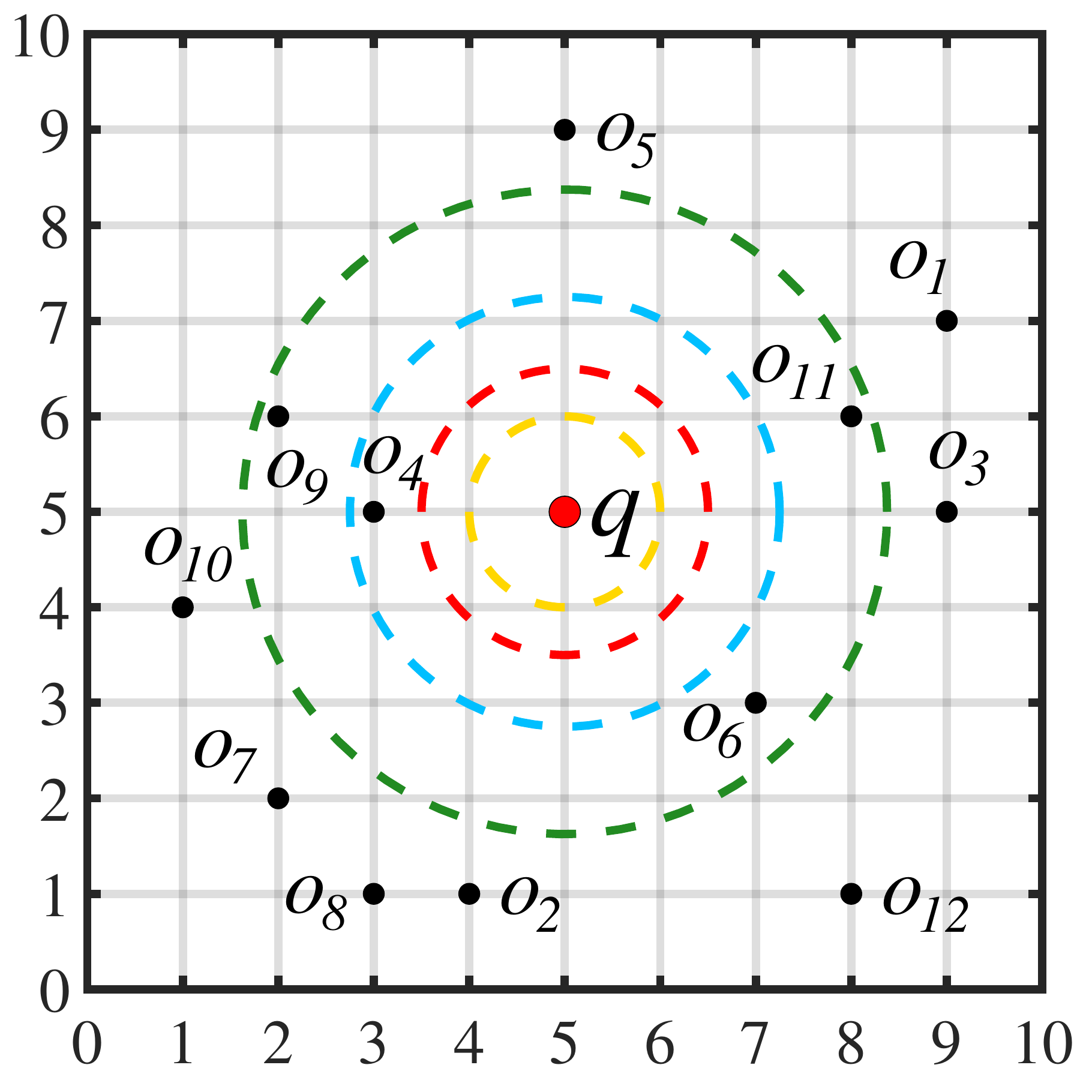}
  \caption{An illustration of $(r, c)$-NN and $c$-ANN }
  \label{fig: rcNN-cANN}
\end{figure}

\end{example}

\subsection{Locality-Sensitive Hashing}
Locality-sensitive hashing  is the foundation of our method. 
For a hash function $h$, two points $o_1$ and $o_2$ are said to collide over $h$ if $h(o_1) = h(o_2)$, \ie they are mapped into the same bucket using $h$. 
The formal definition of LSH is given below\cite{DBLP:conf/vldb/GionisIM99}:

\begin{definition}[LSH]
\label{def:LSH}
Given a distance $r \geq 0$ and an approximation ratio $c > 1$, a family of hash functions $\mathcal{H} = \{h:\mathbb{R}^d \rightarrow \mathbb{R}\}$ is called $(r, cr, p_1, p_2)$-locality-sensitive, if for $\forall o_1, o_2 \in \mathbb{R}^d$, it satisfies both conditions below:
\begin{itemize}
\item[(1)] If $\|o_1, o_2\| \leq r, \Pr[h(o_1) = h(o_2)] \geq p_1$;
\item[(2)] If $\|o_1, o_2\| > cr, \Pr[h(o_1) = h(o_2)] \leq p_2$, 
\end{itemize}
where $h\in \mathcal{H}$ is chosen at random, $p_1, p_2$ are collision probabilities and $p_1 > p_2$.
\end{definition}

A typical LSH family for Euclidean space
in  static LSH methods (\eg E2LSH) is defined as follows \cite{DBLP:conf/compgeom/DatarIIM04}:
\begin{equation}
\label{eq:lsh}
    h(o) =\left \lfloor \frac{\vec{a} \cdot \vec{o} + b}{w}  \right \rfloor,
\end{equation}
where $\vec{o}$ is the vector representation of a point $o\in \mathbb{R}^d$, $\vec{a}$ is a $d$-dimensional vector where each entry is chosen independently from a $2$-stable distribution, \ie \cbl{the} standard normal distribution, $b$ is a real number chosen uniformly from $[0, w)$, and $w$ is a pre-defined integer. 
Denote the distance between any two points as $\tau$, then the collision probability under such hash function can be computed as:
\begin{equation}
    p(\tau; w) = \Pr[h(o_1) = h(o_2)] = 2\int_0^w \frac{1}{\tau}\cdot f(\frac{t}{\tau})\cdot(1-\frac{t}{w})\,dt,
\end{equation}
where $f(x) = \frac{1}{\sqrt{2\pi}}e^{-\frac{x^2}{2}}$ is the probabilistic density function (pdf) of the standard normal distribution. For a given $w$, it is easy to see that $p(\tau; w)$ decreases monotonically with $\tau$. Therefore, the hash family defined by Equation \ref{eq:lsh} is $(r, cr, p_1, p_2)$-locality-sensitive, where $p_1 = p(r; w)$ and $p_2 = p(cr; w)$.
\subsection{Locality-Sensitive Hashing with Dynamic Bucketing}
A typical dynamic LSH family for \cthe{the} Euclidean space is defined as follows \cite{DBLP:journals/pvldb/HuangFZFN15}:

\begin{equation}
\label{eq:lsh2}
    h(o) = \vec{a} \cdot \vec{o},
\end{equation}
where $\vec{a}$ is the same as in Equation \ref{eq:lsh}. For a hash function $h$, two points $o_1$ and $o_2$ are said to collide over $h$ if $|h(o_1) - h(o_2)|\leq \frac{w}{2}$. In this sense, the collision probability can be computed as:
\begin{equation}
\label{eq:cp2}
    p(\tau; w) = \Pr[|h(o_1) - h(o_2)|\leq \frac{w}{2}] = \int_{-\frac{w}{2\tau}}^{\frac{w}{2\tau}}f(t)\,dt,
\end{equation}
It is easy to see that the hash family defined by Equation \ref{eq:lsh2} is $(r, cr, p_1, p_2)$-locality-sensitive, where $p_1 = p(r; w)$ and $p_2 = p(cr; w)$. 
In what follows, $\mathcal{H} = \{h: \mathbb{R}^d \to \mathbb{R}\} $ refers to the LSH family identified by Equation \ref{eq:lsh2} and $p(\tau; w)$ refers to \cbl{the} corresponding collision probability in Equation \ref{eq:cp2} unless otherwise stated. 

Next, we introduce a simple but important observation that inspires us to design a dynamic $(K, L)$-index.


\begin{observation}
The hash family is $(r, cr, p(1, w_0), p(c, w_0))$-locality-sensitive for any search radius $r$ and $w = rw_0$, where $w_0$ is a positive constant.
\end{observation}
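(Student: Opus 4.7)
The plan is to exploit a scale-invariance property of the Gaussian-based dynamic LSH function in Equation~\ref{eq:lsh2}: the collision probability $p(\tau;w)$ in Equation~\ref{eq:cp2} depends on $\tau$ and $w$ only through the ratio $w/\tau$. Therefore, substituting $w = rw_0$ turns every occurrence of $w/\tau$ into $w_0/(\tau/r)$, and scaling $r$ becomes equivalent to rescaling the relevant distance. The proof should just verify the two conditions of Definition~\ref{def:LSH} by plugging this substitution into Equation~\ref{eq:cp2}.

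First I would evaluate the collision probability at the two boundary distances. Setting $\tau = r$ and $w = rw_0$ in Equation~\ref{eq:cp2} gives
\begin{equation*}
p(r;\,rw_0) \;=\; \int_{-w_0/2}^{w_0/2} f(t)\,dt \;=\; p(1;\,w_0),
\end{equation*}
and setting $\tau = cr$ and $w = rw_0$ gives
\begin{equation*}
p(cr;\,rw_0) \;=\; \int_{-w_0/(2c)}^{w_0/(2c)} f(t)\,dt \;=\; p(c;\,w_0).
\end{equation*}
These are exactly the claimed constants $p_1$ and $p_2$, and they do not depend on $r$.

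Next I would invoke the monotonicity of $p(\tau;w)$ in $\tau$ for fixed $w$, which follows immediately from Equation~\ref{eq:cp2} because the integration limits $\pm w/(2\tau)$ shrink toward zero as $\tau$ grows while $f$ is a positive even pdf. With $w = rw_0$ fixed, for any two points $o_1,o_2$ at distance $\tau = \|o_1,o_2\|$: if $\tau \le r$, then $p(\tau;rw_0) \ge p(r;rw_0) = p(1;w_0)$, establishing condition~(1) of Definition~\ref{def:LSH}; if $\tau > cr$, then $p(\tau;rw_0) \le p(cr;rw_0) = p(c;w_0)$, establishing condition~(2). Finally, since $1 < c$, the same monotonicity yields $p(1;w_0) > p(c;w_0)$, so the required strict gap $p_1 > p_2$ holds.

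There is no real obstacle here; the only thing to check carefully is that the collision-probability formula is genuinely a function of $w/\tau$ alone, so that uniformly scaling $w$ with $r$ cancels the dependence on the absolute search radius. The significance of the observation, rather than its difficulty, is what matters: it shows that a single family of hash functions $\vec{a}\cdot\vec{o}$, constructed once and for all, can be reused for every radius $r$ simply by widening the collision window to $rw_0/2$, which is precisely the handle DB-LSH will later use to decouple hashing from bucketing.
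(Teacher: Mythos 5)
Your proof is correct and follows essentially the same route as the paper: the key step in both is the substitution $w = rw_0$, $\tau = r$ (resp.\ $\tau = cr$) into Equation~\ref{eq:cp2}, which collapses the integration limits to $\pm w_0/2$ (resp.\ $\pm w_0/(2c)$) and shows the collision probabilities equal $p(1;w_0)$ and $p(c;w_0)$ independently of $r$. You are somewhat more explicit than the paper in invoking the monotonicity of $p(\tau;w)$ in $\tau$ to cover all distances $\tau \le r$ and $\tau > cr$ and to confirm $p_1 > p_2$, but this is exactly the argument the paper leaves implicit.
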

\label{observation1}
\begin{proof}
It is easy to see that for any search radius $r$ and $w = rw_0$, the following equation holds:
\begin{equation}
    p(r; w_0r)=\int_{-\frac{w_0r}{2r}}^{\frac{w_0r}{2r}}f(t)\,dt=\int_{-\frac{w_0}{2}}^{\frac{w_0}{2}}f(t)\,dt=p(1;w_0).
\end{equation}
That is, $\mathcal{H}$ is $(r, cr, p(1, w_0), p(c, w_0))$-locality-sensitive.
\end{proof}
By the above observation, we do not need to physically maintain multiple $(K, L)$-indexes from $(r, cr, p(r, w), $ $p(cr, w))$-locality-sensitive hash family  in advance to support the corresponding $(r, c)$-NN queries with different $r$. Instead, we can dynamically partition buckets with the width required by different queries via only one $(K, L)$-index, where $K = \log_{1/p(c; w_0)}(\frac{n}{t})$, $L = (\frac{n}{t})^{\rho^*}$, $\rho^* = \frac{\ln{1/p(1; w_0)}}{\ln{1/p(c; w_0)}} ~\text{and}~ \cbl{t}$ \cbl{is a constant to balance the query efficiency and space consumption (see Remark \ref{remark:t}, Section \ref{theory})}. As explained in Section \ref{theory}, the choice of $K$ and $L$ guarantees correctness of DB-LSH for $(r,c)$-NN search and $c$-ANN search. This is a key observation that leads to our novel approach to be presented next.

\section{Our Method} \label{sec:method}
\label{method}

DB-LSH consists of an indexing phase for mapping and a query phase for dynamic bucketing. We first give an overview of this novel approach, followed by detailed descriptions of the two separate phases. 


\subsection{Overview of DB-LSH}
Considering the limitations of C2 and MQ discussed earlier, we propose to keep the basic idea of \cbl{the} static $(K, L)$-index, which provides an opportunity to answer $c$-ANN queries with \cthe{the} sub-linear query cost. To remove the inherent obstacles 
in static $(K, L)$-index methods, DB-LSH develops a dynamic bucketing strategy that constructs query-centric hypercubic buckets with the required width in the query phase.
In the indexing phase, DB-LSH projects each data point into $L$ $K$-dimensional spaces by $L \times K$ independent LSH functions. Unlike static $(K, L)$-index methods that quantify the projected points with a fixed size, we index points in each $K$-dimensional space with a multi-dimensional index. In the query phase, an $(r, c)$-NN query with sufficiently small $r$, say $r = 1$, is issued at the beginning. To answer this query,  $L$ query-centric hypercubic buckets with width $w_0$ are constructed and the points in them are found by window queries.
If the retrieved point is within $cr$ of $q$, DB-LSH returns it as a correct $c$-ANN result. Otherwise, the next $(r, c)$-NN query with $r = c$ is issued, and the width of the dynamic hypercubic bucket $w$ is updated from $w_0$ to $cw_0$ accordingly. By gradually extending the search radii $r = c^2, c^3 \ldots$ and bucket  width $w = w_0r$, DB-LSH achieves finding  $c$-ANN  with a constant success probability on top of just one $(K, L)$-index after accessing a maximum of $2tL + 1$ points. 



\begin{figure}[htbp]
  \centering
  \includegraphics[width=0.46\linewidth]{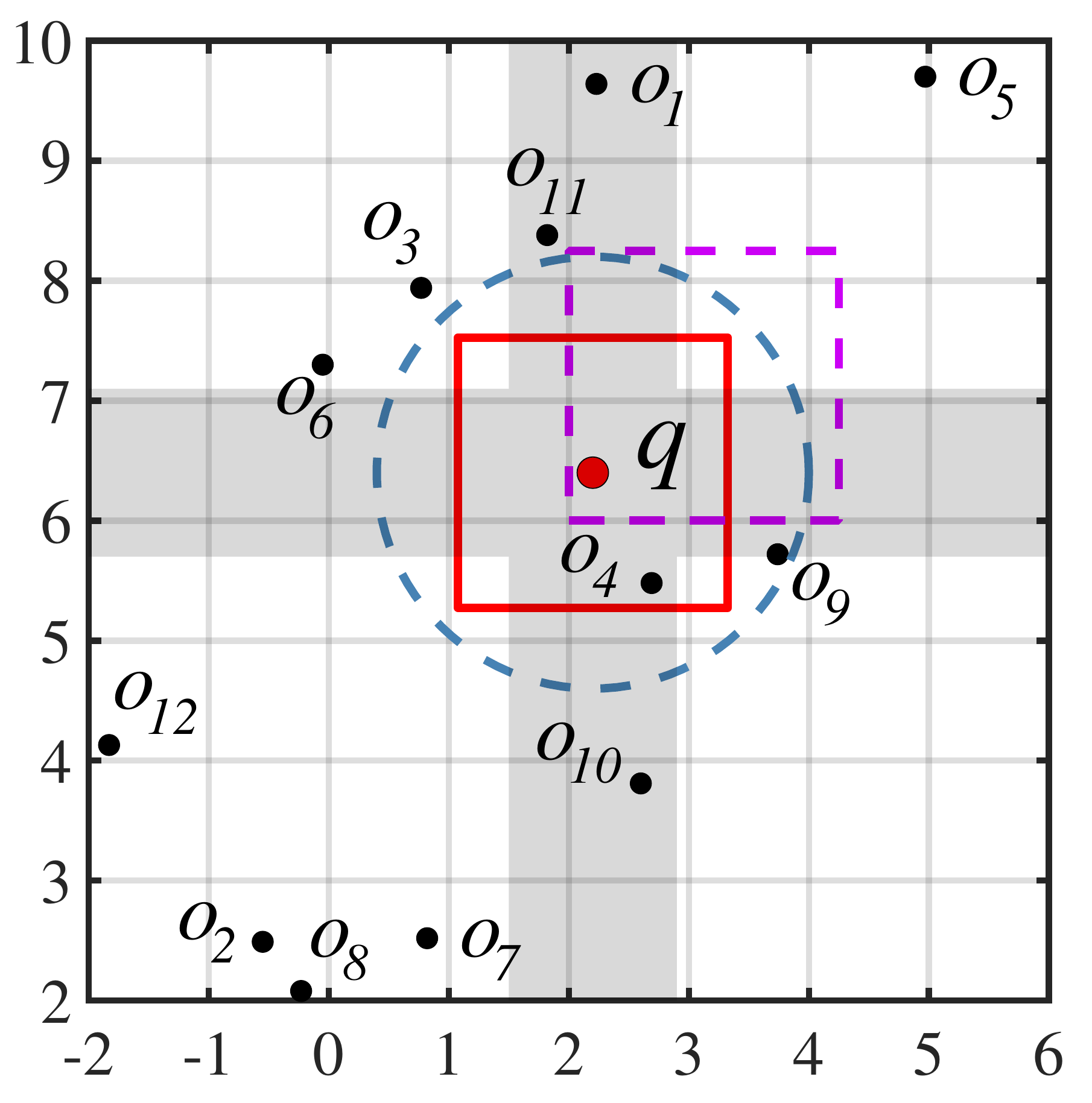}
  \caption{Search regions of DB-LSH and other LSH methods}
  \label{fig: searchregion}
\end{figure}

Figure \ref{fig: searchregion} gives an intuitive explanation of the advantages of DB-LSH on the search region.
\cbl{The} dotted purple square is the search region in E2LSH. We can notice that points close to the query might be hashed to a different bucket (\eg $o_4$), especially when $q$ is near to the bucket boundary, which jeopardizes the accuracy. \cbl{The} gray cross-like region is the search region of C2. Such an unbounded region is much bigger than that of DB-LSH (\cbl{the} red square), which leads to the number of points accessed arbitrarily large in the worst case and thus incurs a large query cost. \cbl{The} dotted blue circle is the search region of MQ. Although it is a bounded region, finding the points in it becomes more complex than in other regions. 
DB-LSH still uses hypercubic buckets (search region) as used in static $(K,L)$-index methods, but achieves much better accuracy. \cbl{The} query-centric bucketing strategy eliminates the hash boundary issue. The overhead of dynamic bucketing is affordable because of efficient window queries via multi-dimensional indexes.

To summarize, DB-LSH is hopeful of reaching a given accuracy with the least query cost among all these methods.
In what follows, we give everything that a practitioner needs to know to apply DB-LSH.

\subsection{Indexing Phase}
\label{index}
\cbl{The} indexing phase consists of two steps: constructing projected spaces and indexing points by multi-dimensional indexes.

\noindent \textbf{Constructing projected  spaces.} 
Given a $(1, c, p_1, p_2)$-locality-sensitive hash family $\mathcal{H}$, let $\mathcal{G}$ be the set of all subsets with $K$ hash functions chosen independently from $\mathcal{H}$, \ie each element $G \in \mathcal{G}$ is a $K$-dimensional compound hash of the form: 
\begin{equation}
    G(o) = (h_1(o), h_2(o), \ldots, h_K(o)),
\end{equation}
where $h_j \stackrel{\text{i.i.d}}{\sim}\mathcal{H}, j = 1, \ldots, K$.
Then, we sample $L$ instances independently from $\mathcal{G}$ denoted as $G_1, G_2, \ldots, G_L$, and compute projections of each data object $o \in \mathcal{D}$ as follows:
\begin{equation}
    G_i(o) = (h_{i1}(o), h_{i2}(o), \ldots, h_{iK}(o)); i = 1, \ldots, L.
\end{equation}

\noindent \textbf{Indexing points by multi-dimensional indexes.} In each $K$-dimensional projected space, we index points with a multi-dimensional index. The only requirement of the index is that it can  efficiently answer a window query in \cthe{the} low-dimensional space. In this paper, we simply choose \cthe{the} R$^*$-Tree \cite{DBLP:conf/ssd/HwangKCL03} as our index due to an ocean of optimizations and toolboxes, which enables \cthe{the} R$^*$-Tree to perform robustly in practice.  \cthe{The} CR$^*$-Tree \cite{DBLP:conf/sigmod/KimCK01}, X-tree \cite{DBLP:conf/vldb/BerchtoldKK96} or multi-dimensional learned index \cite{DBLP:conf/sigmod/Li0ZY020} can certainly be used to potentially further improve our approach. 

\subsection{Query Phase}
\label{query}
DB-LSH can directly answer an $(r, c)$-NN query with any search radius $r$ by exploiting the $(K, L)$-index that has been built for $(1, c)$-NN in the indexing phase, as described in Section \ref{index}.  Algorithm \ref{alg:rc-nn} outlines the query processing. To find the $(r,c)$-NN of a query $q$, we consider $L$ $K$-dimensional projected spaces in order. For each space, we first compute the hash values of $q$, \ie $G_i(q) = (h_{i1}(q), h_{i2}(q), \ldots, h_{iK}(q))$ (Line \ref{comp_q}). Then, a window query, denoted as  $\WW(G_i(q), w_0 r)$, is conducted using \cthe{the} R$^*$-Tree. To be more specific,  $\WW(G_i(q), w)$ means a query that needs to return points in the following hypercubic region:
\begin{equation}
    [h_{i1}(q) - \frac{w}{2}, h_{i1}(q)  + \frac{w}{2}] \times \cdots \times [h_{iK}(q) - \frac{w}{2}, h_{iK}(q) + \frac{w}{2}].
\end{equation}
Without confusion, we also use $\WW(G_i(q), w)$ to denote a region as above. For each point falling in such a region, we compute its distance to $q$. If the distance is less than $cr$ or we have 
verified $2tL + 1$ points, the algorithm reports the current point and stops. Otherwise, the algorithm returns nothing.
According to Lemma \ref{le:rc-nn}, to be introduced in Section \ref{theory}, DB-LSH is able to correctly answer \cbl{an} $(r, c)$-NN query with a constant success probability. 


\begin{algorithm}[htbp]
\caption{$(r, c)$-NN Query}
\label{alg:rc-nn}
\LinesNumbered 
\KwIn{ $q$: a query point; $r$: query radius; $c$: the approximation ratio; $t$: a positive integer}
\KwOut{A point $o$ or $\varnothing$}
	$cnt \gets 0$\;
	\For{$i = 1$ to $L$}
	{
	    
	    Compute $G_i(q)$\; \label{comp_q}
	    \While{a  point $o \in \WW(G_i(q),w_0\cdot r)$ is found}
	    {
	        $cnt\gets cnt$ + $1$\;
	        \If{$cnt = 2tL + 1$ or $\|q, o\| \leq cr$}
	        {\label{two_case}\Return{$o$}\;}\label{return}
	    }
	}
    \Return{$\varnothing$}\;
\end{algorithm}

\begin{algorithm}[htbp]
\caption{$c$-ANN Query}
\label{alg:c-ann} 
\LinesNumbered 
\KwIn{ $q$: a query point; $c$: the approximation ratio; 
}
\KwOut{A point $o$}
	$r \gets 1$\;
	\While{TRUE}
	{ 
	    $o \gets$ \textbf{call $(r, c)$-NN}\; 
	 \eIf{ $o \neq \varnothing$} 
	{\label{terminate}
	\Return{$o$\;} \label{return2}
	}
	{
	$r \gets cr$\;
	}
	}

\end{algorithm} 

\noindent $\bm{c}$\textbf{-ANN.} A $c$-ANN query can be answered by conducting a series of $(r, c)$-NN queries with $r = 1, c, c^2 ,\ldots$. Algorithm \ref{alg:c-ann} demonstrates the details of finding $c$-ANN. Given a query $q$ and an approximation ratio $c$, the algorithm starts by \cbl{the} $(1, c)$-NN query. After that, if we have found a satisfying object or have accessed enough points \ie $o \neq \varnothing$ (Line \ref{terminate}), the algorithm reports the current 
point and terminates immediately. Otherwise, it 
enlarges the query radius by a factor of $c$ and invokes \cbl{the} $(r, c)$-NN query (Algorithm \ref{alg:rc-nn}) again till the termination conditions are satisfied.  According to Theorem \ref{th:guarantee}, to be introduced in Section \ref{theory}, DB-LSH is able to correctly answer a $c$-ANN query with a constant success probability.

\begin{example}
Figure \ref{fig:cANN} gives an example of answering \cthe{a} $1.5^2$-ANN query by DB-LSH,  where we choose $ K = 2$ and $ L = 1$ for simplicity. Figure \ref{fig:data} and Figure \ref{fig:hash} exhibit the points in the original and projected space, respectively. 
Assume $w_0$ is set to $1.5$. First of all, we issue a $(1,c)$-NN \cbl{query} in the original space (\cbl{the} yellow circle in Figure \ref{fig:data}). To answer this query, we conduct window query $\WW(G(q), w_0)$ in the projected space (\cbl{the} yellow square in Figure \ref{fig:hash}). Since no point is found, \cbl{an}  $(r, c)$-NN query with larger $r$, \ie $r = c$ (\cbl{the} red circle in Figure \ref{fig:data}) is issued, and window query $\WW(G(q), w_0c)$ (\cbl{the} red square in Figure \ref{fig:hash}) is performed accordingly. Then, $o_4$ is found as a candidate and we verify it by computing its original distance to $q$. Since $\|q,o_4\|=2<cr=2.25$ (\cbl{the} blue circle in Figure \ref{fig:data}), $o_4$ is returned as the result.
\begin{figure}[htbp]
	\centering
	{
	\begin{minipage}[c]{0.23\textwidth}
		\subfigure[Original Space]{\includegraphics[width=\linewidth]{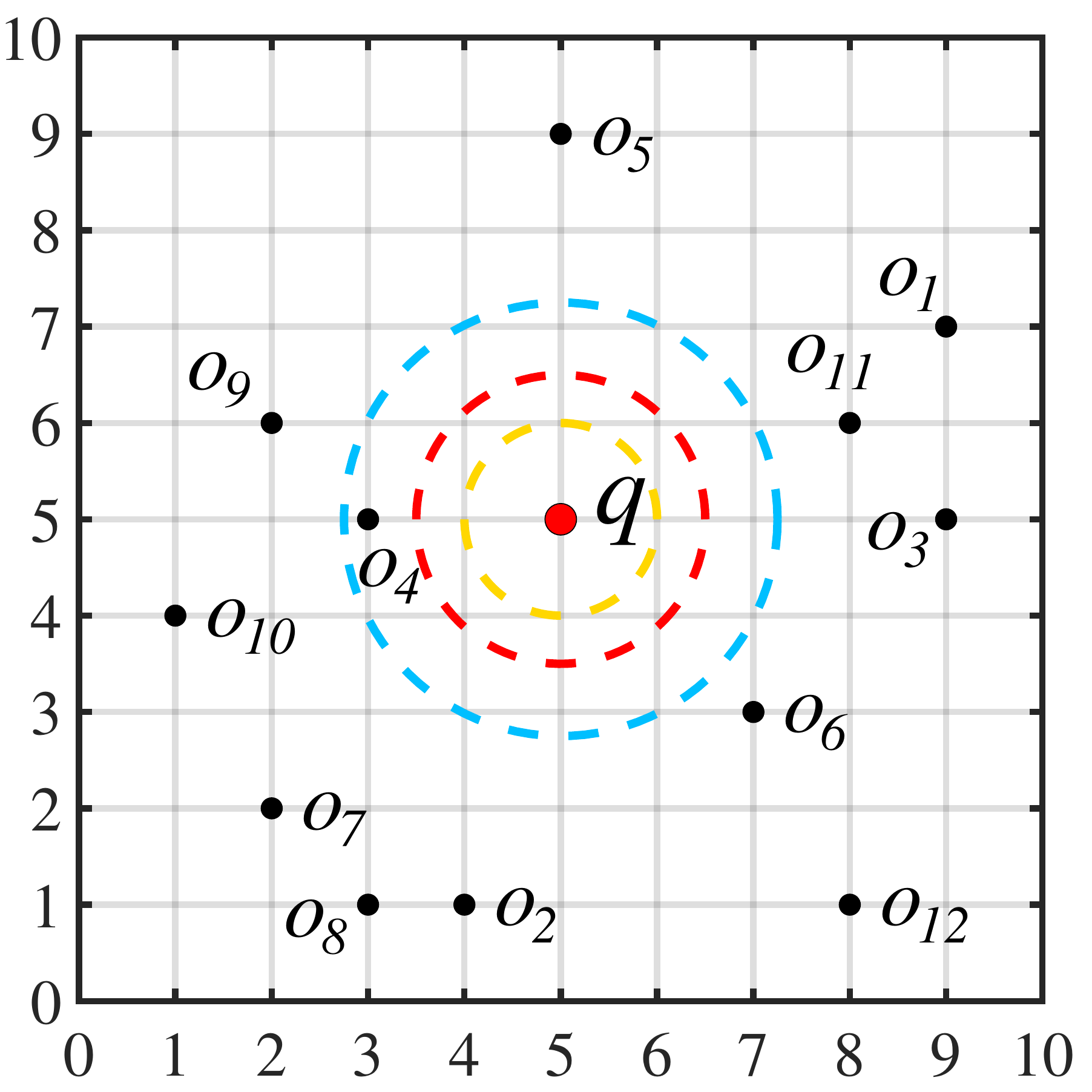}\label{fig:data}}
	\end{minipage}\hspace{.1in}
	\begin{minipage}[c]{0.23\textwidth}
		\subfigure[Projected Space]{\includegraphics[width=\linewidth]{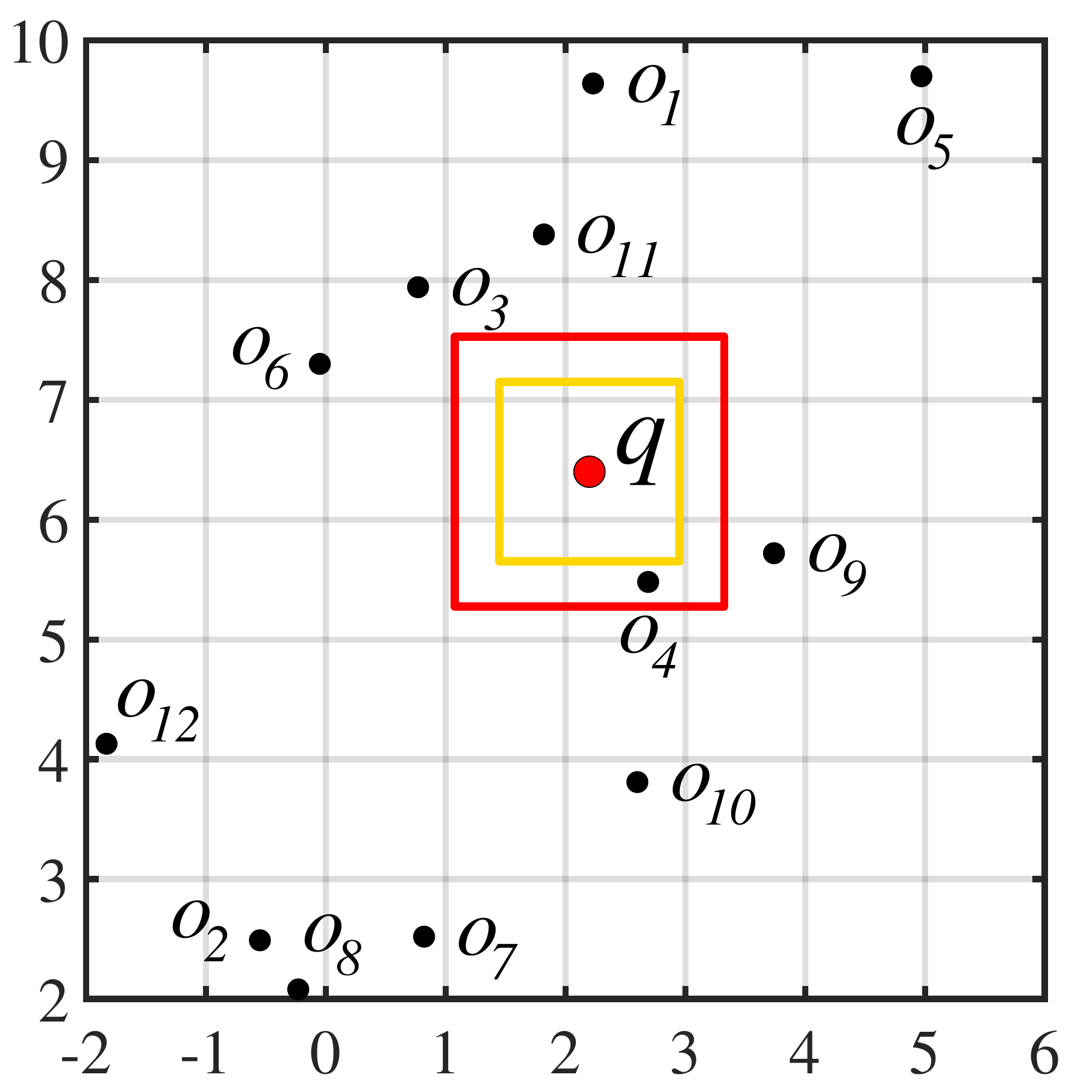}\label{fig:hash}}
	\end{minipage}\hspace{.1in}
	}
	\vspace{-0.1em}\caption{An example of $c$-ANN search using DB-LSH}\vspace{-1em}\label{fig:cANN}
\end{figure}

\end{example}

\noindent 
\textbf{$\bm{(c, k)}$-ANN.}
Algorithm \ref{alg:c-ann} can be easily adapted to answer $(c, k)$-ANN queries. Specifically, it suffices to modify the two termination conditions to the following:
\begin{itemize}
    \item At a certain $(r, c)$-NN query, the total number of objects accessed so far exceeds $2tL + k$ (corresponding to the first case in Line \ref{two_case} of Algorithm \ref{alg:rc-nn}).
    \item At a certain $(r, c)$-NN query, the $k$-th nearest neighbor found so far is within distance $cr$ of $q$ (corresponding to the second case in Line  \ref{two_case} of Algorithm \ref{alg:rc-nn}).
\end{itemize}
DB-LSH terminates if and only if one of the situations happens. Also, apparently Line \ref{return} in \cbl{Algorithm \ref{alg:rc-nn}} (or Line \ref{return2} in Algorithm \ref{alg:c-ann}) should return the $k$ nearest neighbors.

\section{Theoretical Analysis} \label{sec:theory}
\label{theory}
It is essential to provide a theoretical analysis of DB-LSH. First, we discuss the quality guarantees for DB-LSH. Then, we prove that DB-LSH achieves lower query time and space complexities,  with an emphasis on deriving \cbl{a} smaller $\rho^*$.

\subsection{Quality Guarantees}
We demonstrate that DB-LSH is able to correctly answer a $c^2$-ANN query. Before proving it, we first
define two events as follows:

\begin{itemize}[leftmargin=7mm]
    \item[\textbf{E1:}] If there exists a point $o$ satisfying $\|o,q\|\le r$, then $G_i(o)\in \WW(G_i(q),w_0 r)$ for some $i = 1, \ldots, L$;
    \item[\textbf{E2:}] The number of points satisfying two conditions below is no more than $2tL$:  
 1) $\|o,q\|> cr$; and
 2) $G_i(o)\in \WW(G_i(q),w_0 r)$ for some $i = 1, \ldots, L$.
\end{itemize}
\begin{lemma}\label{le:param}
	For given $w_0$ and $t$, by setting $K=\log_{1/p_2} (\frac{n}{t})$ and $L=(\frac{n}{t})^{\rho^*}$ where $p_1=p(1; w_0), p_2=p(c; w_0)$ and $\rho^*=\frac{\ln 1/p_1}{\ln 1/p_2}$, the probability that \emph{\textbf{E1}} occurs is at least $1-1/e$ and the probability that \emph{\textbf{E2}} occurs is at least $1/2$.
\end{lemma}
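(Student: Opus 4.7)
The plan is to combine Observation~\ref{observation1} with the standard amplification/union-bound argument underlying $(K,L)$-index analysis, but with the new choice of parameters. The essential input is that, by Observation~\ref{observation1}, the family $\mathcal{H}$ equipped with bucket width $w = w_0 r$ is $(r, cr, p_1, p_2)$-locality-sensitive with $p_1 = p(1; w_0)$ and $p_2 = p(c; w_0)$ \emph{independent of} $r$. Hence for a single coordinate hash $h$, writing $o \sim_h q$ for $|h(o) - h(q)| \le w_0 r / 2$, we have $\Pr[o \sim_h q] \ge p_1$ when $\|o,q\| \le r$ and $\Pr[o \sim_h q] \le p_2$ when $\|o,q\| > cr$.

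First I would lift this to the compound hash $G_i$. By independence of the $K$ base hash functions, for any point $o$ with $\|o,q\| \le r$ we have $\Pr[G_i(o) \in \WW(G_i(q), w_0 r)] \ge p_1^K$, and for any $o$ with $\|o,q\| > cr$ we have $\Pr[G_i(o) \in \WW(G_i(q), w_0 r)] \le p_2^K$. Substituting $K = \log_{1/p_2}(n/t)$ gives $p_2^K = t/n$, and a direct manipulation using $\rho^* = \ln(1/p_1)/\ln(1/p_2)$ yields $p_1^K = (t/n)^{\rho^*} = 1/L$. These two identities are the backbone of the argument.

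For event \textbf{E1}, fix any $o$ with $\|o,q\| \le r$ (if none exists the claim is vacuous). By independence of $G_1,\ldots,G_L$, the probability that $o$ is missed by every window query is at most $(1 - p_1^K)^L = (1 - 1/L)^L \le 1/e$, so $\Pr[\textbf{E1}] \ge 1 - 1/e$.

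For event \textbf{E2}, let $X$ be the number of pairs $(o, i)$ with $\|o,q\| > cr$ such that $G_i(o) \in \WW(G_i(q), w_0 r)$; note that a single point $o$ caught by several different $G_i$'s is counted with multiplicity, which is an over-approximation of the actual number of distinct verified points, so bounding $X$ suffices. By linearity of expectation, $\mathbb{E}[X] \le n \cdot L \cdot p_2^K = n \cdot L \cdot (t/n) = tL$. Markov's inequality then gives $\Pr[X > 2tL] \le 1/2$, so $\Pr[\textbf{E2}] \ge 1/2$.

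The only conceptually tricky step is verifying the identity $p_1^K = 1/L$, which is really just bookkeeping with logarithms using the definition of $\rho^*$; otherwise the proof is a clean two-line application of $(1-1/L)^L \le 1/e$ and Markov. The main thing to be careful about is that the base hash functions inside a single $G_i$ use \emph{independent} random vectors (so the compound collision probability factorises), and that one must interpret \textbf{E2} with multiplicity (or equivalently bound the total work performed by the $L$ window queries) so that linearity of expectation applies cleanly across the $L$ projections.
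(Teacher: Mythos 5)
Your proposal is correct and follows essentially the same route as the paper's own proof: amplify the per-function collision bounds to $p_1^K$ and $p_2^K$, apply $(1-p_1^K)^L \le 1/e$ for \textbf{E1}, and use linearity of expectation plus Markov's inequality for \textbf{E2}. Your version is slightly more explicit than the paper's in spelling out the identity $p_1^K = 1/L$ and in noting that \textbf{E2} should be counted with multiplicity over the $L$ projections, but these are refinements of the same argument, not a different one.
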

\begin{proof}
	If there exists a point $o$ satisfying $\|o, q\| \leq r$, then the LSH property implies that for any $h_{ij} \in \mathcal{H}, i = 1, \ldots L, j = 1, \ldots, K$, $\Pr[|h_{ij}(o)-h_{ij}(q)|\le \frac{w_0r}{2}]\ge p(r;w_0r)=p_1$. Then, the probability that $G_i(o)\in \WW(G_i(q),w_0 r)\ge p_1^K$, and thus the probability that \textbf{E1} does not occur will not exceed $(1-p_1^K)^L$. Therefore, $\Pr[\textbf{E1}]\ge 1-(1-p_1^K)^L\ge 1-1/e$ when $K$ and $L$ is set as above.
	Likewise, if there exists a point $o$ satisfying $\|o, q\| > cr$, we have $\Pr[|h_{ij}(o)-h_{ij}(q)|\le \frac{w_0 r}{2}]\le p(cr; w_0r)=p_2$. Then, the probability that $G_i(o)\in \WW(G_i(q),w_0r)\le p^{K}_2=\frac{t}{n}$, and thus the expected number of such points in a certain projected space does not exceed $\frac{t}{n}\cdot n=t$. Therefore, the expected number of such points in all $L$ projected spaces is upper bounded by $tL$. By \textit{ Markov's} \textit{inequality}, we have $\Pr[\textbf{E2}]>1-\frac{tL}{2tL}=1/2$.
\end{proof}

It is easy to see that the probability that \textbf{E1} and \textbf{E2} hold at the same time is a constant, which can be computed as $\Pr[\textbf{E1E2}]=\Pr[\textbf{E1}]-\Pr[\textbf{E1}\overline{\textbf{E2}}]>\Pr[\textbf{E1}]-\Pr[\overline{\textbf{E2}}]=1/2-1/e$. Next, we demonstrate that when \textbf{E1} and \textbf{E2} hold at the same time, Algorithm \ref{alg:rc-nn} is correct for answering an $(r,c)$-NN query.

\begin{lemma}\label{le:rc-nn}
	 Algorithm \ref{alg:rc-nn} answers the $(r,c)$-NN query with at least a constant probability of $1/2-1/e$.
\end{lemma}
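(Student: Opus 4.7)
The plan is to reduce the correctness of Algorithm \ref{alg:rc-nn} to the joint event \textbf{E1}$\cap$\textbf{E2}. From Lemma \ref{le:param} we already have $\Pr[\textbf{E1}]\ge 1-1/e$ and $\Pr[\textbf{E2}]\ge 1/2$, so by a straightforward inclusion-exclusion argument
\begin{equation*}
\Pr[\textbf{E1}\cap\textbf{E2}] \;\ge\; \Pr[\textbf{E1}]-\Pr[\overline{\textbf{E2}}] \;\ge\; \tfrac12-\tfrac1e .
\end{equation*}
Hence it suffices to prove that \emph{whenever} \textbf{E1} and \textbf{E2} both hold, the output of Algorithm \ref{alg:rc-nn} meets the specification of Definition \ref{def:rcNN}. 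This reduction is the main step; the probability accounting is routine.

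The core argument will split on the three cases of Definition \ref{def:rcNN}. \textbf{Case 1}: some $o'\in\DD$ satisfies $\|q,o'\|\le r$. By \textbf{E1}, $o'$ lies inside $\WW(G_i(q),w_0 r)$ for some $i$, so the while-loop must eventually enumerate it (unless the algorithm halts earlier on a different return-condition, which is also fine). Then I need to argue that the returned point $o$ is always within distance $cr$. If the algorithm returns through the condition $\|q,o\|\le cr$ on line \ref{two_case}, there is nothing to prove. If it returns through $cnt=2tL+1$, I use \textbf{E2}: at most $2tL$ of the points falling in any window have $\|o,q\|>cr$. Since the algorithm would have already returned upon seeing any point with $\|o,q\|\le cr$, all the first $2tL$ verified points must satisfy $\|o,q\|>cr$; by \textbf{E2} they already exhaust the budget of ``far'' points, so the $(2tL{+}1)$-th point is forced to satisfy $\|o,q\|\le cr$, and the return is again valid.

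\textbf{Case 2}: no point lies within $cr$ of $q$. Then every point retrieved by the window queries is ``far'' (distance $>cr$), so the $\|q,o\|\le cr$ branch of line \ref{two_case} can never fire. By \textbf{E2}, the total number of far points appearing in any of the $L$ windows is at most $2tL$, so the counter $cnt$ never reaches $2tL+1$ either. The algorithm therefore exhausts all window queries and returns $\varnothing$, as required. \textbf{Case 3} is vacuous since the specification permits any output.

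The one place that demands care is Case 1 with the $cnt=2tL+1$ exit: the pigeonhole step needs the observation that the algorithm cannot have ``passed over'' a close point, which in turn relies on the fact that the condition on line \ref{two_case} is checked for every enumerated point before $cnt$ is allowed to grow further. Once this is made explicit, the two cases combine to show correctness under \textbf{E1}$\cap$\textbf{E2}, and the $\tfrac12-\tfrac1e$ bound on the success probability follows from Lemma \ref{le:param}.
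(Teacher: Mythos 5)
Your proof is correct and follows essentially the same route as the paper's: condition on $\textbf{E1}\cap\textbf{E2}$ (whose probability bound comes from Lemma~\ref{le:param}), use \textbf{E2} to argue that the $cnt=2tL+1$ exit must return a point within $cr$, and use \textbf{E1} to justify returning $\varnothing$ when no close point exists. The only difference is organizational — you case-split on Definition~\ref{def:rcNN} while the paper case-splits on the algorithm's three termination branches — and your explicit pigeonhole argument for the counter exit is a slightly more careful rendering of the same step.
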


\begin{proof}
	Assume that \textbf{E1} and \textbf{E2} hold at the same time, which occurs with at least a constant probability $1/2-1/e$. In this case, if Algorithm \ref{alg:rc-nn} terminates after accessing $2tL+1$ points, then the current point $o$ must satisfy $\|q, o\| \leq cr$ due to \textbf{E2}, and thus a correct result is found. If Algorithm \ref{alg:rc-nn} terminates because of  finding a point satisfying $\|q, o\| \leq cr$, this point is obviously a correct $(r, c)$-NN. If $L$ window queries are over (the algorithm does not terminate because of either already accessing $2tL + 1$ points or  finding a point within $cr$ of $q$), it indicates that no point satisfying $\|q, o\| \leq r$ due to \textbf{E1}. According to the definition of $(r, c)$-NN search, it is reasonable to return nothing. Therefore, when \textbf{E1} and \textbf{E2} hold at the same time, an $(r,c)$-NN query is always correctly answered when Algorithm \ref{alg:rc-nn} terminates. That is,  Algorithm \ref{alg:rc-nn} can answer the $(r,c)$-NN query with at least a constant probability of $1/2-1/e$.
\end{proof}

\begin{theorem}\label{th:guarantee}
	Algorithm \ref{alg:c-ann} returns a $c^2$-ANN $(c>1)$ with at least constant probability of $1/2-1/e$.
\end{theorem}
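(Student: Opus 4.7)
The plan is to pinpoint a single ``critical round'' of Algorithm \ref{alg:c-ann} at which Lemma \ref{le:rc-nn} can be invoked, and to argue that this one application of the lemma suffices to establish the $c^2$-approximation with probability at least $1/2 - 1/e$. The point is to avoid union-bounding over the $\Theta(\log_c n)$ radii tried by the algorithm, which would destroy the constant success probability.

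First I would fix the true NN $o^*$ of $q$, set $r^* = \|q, o^*\|$, and let $i^*$ be the smallest non-negative integer with $c^{i^*} \geq r^*$. Then $c^{i^* - 1} \leq r^* \leq c^{i^*}$ (taking $r^* \leq 1$ when $i^* = 0$), and the critical round is the iteration of Algorithm \ref{alg:c-ann} at which $r = c^{i^*}$. At that round $o^*$ itself certifies the ``yes'' side of Definition \ref{def:rcNN} since $\|q, o^*\| \leq r$, so Lemma \ref{le:rc-nn} applies in case~(1): with probability at least $1/2 - 1/e$, Algorithm \ref{alg:rc-nn} returns some $o$ with $\|q, o\| \leq cr = c^{i^* + 1}$. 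Using $c^{i^* - 1} \leq r^*$ this gives
\begin{equation*}
\|q, o\| \;\leq\; c^{i^* + 1} \;=\; c^2 \cdot c^{i^* - 1} \;\leq\; c^2 r^*,
\end{equation*}
so $o$ is a $c^2$-ANN whenever Algorithm \ref{alg:c-ann} terminates exactly at round $i^*$.

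To finish, I would show that whenever Algorithm \ref{alg:c-ann} halts earlier, at some round $j < i^*$, the point it outputs is still a $c^2$-ANN under the same good event \textbf{E1}$\cap$\textbf{E2} at round $i^*$. The hook is that the hypercubic windows are nested in $r$: $\WW(G_i(q), w_0 c^j) \subseteq \WW(G_i(q), w_0 c^{i^*})$ for every projection $i$. Hence \textbf{E2} at round $i^*$ already caps at $2tL$ the number of candidates lying at distance strictly greater than $c^{i^*+1}$ from $q$ inside round $j$'s windows as well. Reproducing the counting argument from the proof of Lemma \ref{le:rc-nn} inside this nested family forces the point returned by Algorithm \ref{alg:rc-nn} at round $j$ to satisfy $\|q, o\| \leq c^{j+1} \leq c^{i^*} \leq c r^* \leq c^2 r^*$, so this branch also produces a valid $c^2$-ANN.

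The main obstacle is the ``$cnt = 2tL + 1$'' termination of Algorithm \ref{alg:rc-nn}, which a priori can return an arbitrary point and so does not automatically deliver $\|q, o\| \leq cr$ at early rounds $j < i^*$. The resolution is precisely the window-nesting plus the single bound on ``far'' candidates coming from \textbf{E2} at round $i^*$: together they force the $(2tL+1)$-th examined candidate at any earlier round to sit inside the $c^{i^*+1}$-ball of $q$, which is exactly what we need for the $c^2$-approximation. Once this observation is in place the theorem drops out of a single invocation of Lemma \ref{le:rc-nn} at the critical round, with no union-bound penalty over the radii.
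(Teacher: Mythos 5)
Your overall architecture --- fix the critical radius $r=c^{i^*}$, invoke Lemma \ref{le:rc-nn} only there, and handle earlier terminations via the nesting $\WW(G_i(q),w_0c^j)\subseteq\WW(G_i(q),w_0c^{i^*})$ --- is a genuinely different and more ambitious route than the paper's. The paper simply conditions on \textbf{E1} and \textbf{E2} holding and argues round by round, implicitly requiring them at every radius the algorithm visits without addressing the union-bound issue you are rightly worried about. Your treatment of the critical round itself, and of early terminations triggered by actually finding a point within $cr$ of $q$, is correct.

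However, the step handling early termination via the counter at a round $j<i^*$ --- the step your whole plan hinges on --- does not go through. \textbf{E2} at round $i^*$ bounds by $2tL$ the number of in-window points at distance greater than $c\cdot c^{i^*}=c^{i^*+1}$ from $q$, and by nesting this bound does transfer to round $j$'s windows. But the counting argument in Lemma \ref{le:rc-nn} works only because the early-return test in Line \ref{two_case} of Algorithm \ref{alg:rc-nn} uses the \emph{same} distance threshold as \textbf{E2}. At round $j$ the test is $\|q,o\|\le c^{j+1}$, so when the counter reaches $2tL+1$ all you know is that the first $2tL$ candidates lie at distance greater than $c^{j+1}$ --- not greater than $c^{i^*+1}$. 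Candidates at distances in $(c^{j+1},\,c^{i^*+1}]$ are invisible to \textbf{E2} at round $i^*$, so the first $2tL$ accesses need not exhaust its budget of far points, and the $(2tL+1)$-st candidate that Algorithm \ref{alg:rc-nn} returns can still be one of the up-to-$2tL$ points at distance greater than $c^{i^*+1}$ that \textbf{E2} tolerates; since points at distance greater than $c^2r^*\ (\ge c^{i^*+1})$ are among these, the returned point can fail to be a $c^2$-ANN. Your claimed conclusion $\|q,o\|\le c^{j+1}$ at round $j$ is exactly what \textbf{E2} \emph{at round $j$} would deliver, i.e., the per-round conditioning you set out to avoid. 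To close the argument you would either need \textbf{E1}, \textbf{E2} to hold simultaneously at all radii up to $c^{i^*}$ (and then account for the resulting loss in the success probability), or a different reason why the counter-triggered output at early rounds must be close to $q$.
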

\begin{proof}
	We show that when \textbf{E1} and \textbf{E2} hold at the same time, Algorithm \ref{alg:c-ann} returns a correct $c^2$-ANN result.  Let $o^*$ be the exact NN of query point $q$ in $\DD$ and   $r^*=\|q, o^*\|$. Without loss of generality, we assume $r^* \geq 1$. Obviously, there must exist a integer $l$ such that $c^l \leq r^* < c^{(l+1)}$. Let $r_0 = c^l$.  When enlarging the search radius $r=1,c,c^2,\dots$, we know that $r$ at termination of Algorithm \ref{alg:c-ann} is at most $cr_0$ due to \textbf{E1}. In this case, according to Lemma \ref{le:rc-nn}, the returned point $o$ satisfies that $\|q, o\|\le c\cdot cr_0\le c^2 r^*$, and thus a correct $c^2$-ANN result. Clearly, if Algorithm 1 stops in a smaller $r$ case for either condition, the returned point satisfy  $\|q, o\| \leq cr < c\cdot cr_0 < c^2r^*$. Therefore, Algorithm \ref{alg:c-ann} returns a $c^2$-ANN $(c>1)$ with at least constant probability of $1/2-1/e$.
\end{proof}

\begin{remark} \label{remark:t}
	Unlike the classic $(K,L)$-index methods,  where $K$ and $L$ are set as $K=\log_{1/p_2} n$ and $L=n^{\rho}$, we introduce a constant $t$ to lessen $K$ and $L$. In this manner, the total space consumption will be greatly reduced. The overhead of this strategy is the need to examine at most $2tL$ candidates instead of $2L$ ones, which seems to cause a higher query cost. However, in fact, 
	none of \cbl{the} efficient LSH methods really build $n^{\rho}$ hash indexes and only check 2 candidates in each index.
	Usually, hash indexes much fewer than $n^{\rho}$ are already able to return a sufficiently accurate $c$-ANN.  Therefore, by introducing $t$, we tend to get $2t$ candidates in one index. This kind of parameter setting is more reasonable and practical.
\end{remark}

\subsection{The bound of $\rho^*$} 
As proven in \cite{DBLP:conf/compgeom/DatarIIM04}, $\rho$ is strictly bounded by $1/c$ when $w_0$ is large enough. Such a large bucket size can not be used experimentally since it implies a very large value of $K$ to effectively differentiate points based on their distance. In contrast, we find that $\rho^*$ has a smaller bound than $1/c$ that can be taken even when the bucket width is not too large. To make a better understanding and simplify the proof, we prove the bound of $\rho^*$ in a special case where $w_0$ is set as $2\gamma c^2$, where $\gamma>0$.

\begin{lemma} \label{le:rho}
	By setting $w_0=2 \gamma c^2,\gamma>0$, $\rho^*$ can be bounded by $1/c^\alpha$, where $\alpha=\frac{\gamma\cdot f(\gamma)}{\int_{\gamma}^{+\infty} f(x) dx}$ and $f(x)$ is the pdf of \cbl{the} standard normal distribution.
\end{lemma}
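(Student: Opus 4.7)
The plan is to reduce the desired bound $\rho^*\le c^{-\alpha}$ to a pointwise derivative estimate on a suitably transformed function, and then to control that derivative using the monotonicity of the inverse Mill's ratio of the standard normal. First I would set $F(u):=\int_{-u}^{u}f(t)\,dt$, so that the collision-probability formula \eqref{eq:cp2} gives $p_1=p(1;w_0)=F(\gamma c^2)$ and $p_2=p(c;w_0)=F(\gamma c)$, and hence
\[
\rho^*=\frac{-\ln F(\gamma c^2)}{-\ln F(\gamma c)}.
\]
Introduce $\phi(v):=\ln\bigl(-\ln F(\gamma e^{v})\bigr)$ and $v:=\ln c\ge 0$. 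Taking logarithms of the target inequality rewrites $\rho^*\le c^{-\alpha}$ as $\phi(2v)-\phi(v)\le -\alpha v$, i.e., $\int_{v}^{2v}(-\phi'(s))\,ds\ge \alpha v$. So it suffices to establish the pointwise bound $-\phi'(s)\ge \alpha$ for every $s\ge 0$, equivalently $u=\gamma e^{s}\ge \gamma$.

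A direct differentiation, using $F'(u)=2f(u)$ (because $f$ is even), gives
\[
-\phi'(v)=\frac{2u\,f(u)}{F(u)\,(-\ln F(u))},\qquad u=\gamma e^{v}.
\]
Next I would apply the elementary inequality $-\ln y \le (1-y)/y$ valid for $y\in(0,1]$ (obtained from $\ln y\ge 1-1/y$) with $y=F(u)$. This yields $F(u)\,(-\ln F(u))\le 1-F(u)$, and therefore
\[
-\phi'(v)\ \ge\ \frac{2u\,f(u)}{1-F(u)}\ =\ \frac{u\,f(u)}{\int_{u}^{\infty}f(t)\,dt}\ =:\ g(u),
\]
where the last equality uses $1-F(u)=2\int_{u}^{\infty}f(t)\,dt$.

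The remaining task is to show that $g(u)$ is non-decreasing on $[\gamma,\infty)$, since then $g(u)\ge g(\gamma)=\alpha$ for all $u\ge\gamma$ by the very definition of $\alpha$ in the statement. This follows from two observations: the factor $u$ is obviously increasing, and the reciprocal Mill's ratio $f(u)/\int_{u}^{\infty}f(t)\,dt$ is a classical, strictly increasing function of $u$ on $(0,\infty)$ (equivalently, the Mill's ratio itself is strictly decreasing, a standard property of the standard normal). The product of two positive non-decreasing functions being non-decreasing closes the argument, and chaining the estimates back gives $\phi(2v)-\phi(v)\le -\alpha v$, i.e., $\rho^*\le c^{-\alpha}$.

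The main obstacle I anticipate is choosing the intermediate bound on $-\ln F(u)$ carefully enough that the surviving quantity reproduces exactly the constant $\alpha=\gamma f(\gamma)/\!\int_\gamma^{\infty}\!f(x)\,dx$ stated in the lemma. Too loose a bound (e.g., $-\ln F(u)\le 1-F(u)$ from $-\ln y\ge 1-y$ points the wrong way) kills the constant, while a tighter bound makes the monotonicity step hard to verify in closed form. The inequality $-\ln y\le (1-y)/y$ is exactly the one that is both sharp enough to recover $\alpha$ and simple enough to leave a quantity whose monotonicity reduces to a well-known property of the Gaussian.
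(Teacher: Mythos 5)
Your proof is correct, and it reaches the paper's conclusion by a genuinely different, though closely related, route. The paper first invokes Lemma 1 of Datar et al. to pass from the log-ratio $\rho^*=\frac{\ln 1/p_1}{\ln 1/p_2}$ to the tail-ratio $\frac{1-p_1}{1-p_2}=\int_{\gamma c^2}^{\infty}f(x)\,dx\,/\int_{\gamma c}^{\infty}f(x)\,dx$, and then shows this ratio is at most $c^{-\alpha}$ by proving that $\varphi(u)=u^{\alpha}\int_{\gamma u}^{\infty}f(x)\,dx$ decreases for $u>1$; the condition $\varphi'(u)<0$ forces $\alpha\le\xi(\gamma u)$ with $\xi(v)=v f(v)/\int_{v}^{\infty}f(x)\,dx$, which is settled by the monotonicity of $\xi$. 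You instead keep the log-ratio intact, recast the target as $\phi(2v)-\phi(v)\le-\alpha v$ for $\phi(v)=\ln(-\ln F(\gamma e^{v}))$, and reduce it to the pointwise bound $-\phi'(s)\ge\alpha$; the elementary inequality $-\ln y\le(1-y)/y$ (a pointwise relative of the cited Datar et al. lemma, both consequences of the concavity of $\ln$) then lower-bounds $-\phi'$ by the very same $\xi(u)$, and the identical monotonicity fact finishes. Both arguments thus rest on the same two ingredients --- a log-versus-$(1-y)$ comparison and the increasing hazard rate of the standard normal --- but deploy them at different levels: the paper globally on the ratio, you locally on a logarithmic derivative. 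Your version is self-contained (no external lemma), makes the origin of the constant $\alpha=\xi(\gamma)$ completely transparent, and your factorization of $\xi(u)$ as $u$ times the reciprocal Mill's ratio actually supplies the monotonicity of $\xi$ that the paper only asserts; the paper's version buys brevity at the price of the citation and of proving the bound for the intermediate quantity $\frac{1-p_1}{1-p_2}$ rather than for $\rho^*$ directly.
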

\begin{proof}
	Recall that $\rho^*=\frac{\ln 1/p_1}{\ln 1/p_2}$, we have
	\begin{equation}
		\rho^*\le\frac{1-p_1}{1-p_2}=\frac{\int_{\gamma c^2}^{+\infty} f(x) dx}{\int_{\gamma c}^{+\infty} f(x) dx},
	\end{equation}
	 according to Lemma 1 in \cite{DBLP:conf/compgeom/DatarIIM04}. Given a $\gamma$, we prove 
	 $	 	\frac{\int_{\gamma c^2}^{+\infty} f(x) dx}{\int_{\gamma c}^{+\infty} f(x) dx}\le 1/{c^\alpha}	 $
	 holds for any $c>1$, which is equivalent to prove the following inequality:
	 \begin{equation}\label{eq:c2_c}
	 	(c^2)^\alpha \int_{\gamma c^2}^{+\infty} f(x) dx \le c^\alpha \int_{\gamma c}^{+\infty} f(x) dx.
	 \end{equation}
	 Define a function $\varphi(u)=u^\alpha \int_{\gamma u}^{+\infty} f(x), u>1$. Inequality \ref{eq:c2_c} holds when $\varphi(u)$ decreases monotonically with $u$.
	 To ensure this, let $\varphi'(u)<0$, where $\varphi'(u)$ is the derivative function of $\varphi(u)$, then we have 
	 $\alpha<\frac{\gamma u\cdot f(\gamma u)}{\int_{\gamma u}^{+\infty} f(x) dx}$. That is to say, inequality \ref{eq:c2_c} holds when $\alpha<\frac{\gamma u\cdot f(\gamma u)}{\int_{\gamma u}^{+\infty} f(x) dx}$. 
	 Denote $\xi(v)=\frac{v\cdot f(v)}{\int_{v}^{+\infty} f(x) dx}$, it can be proven that $\xi(v)$ increases monotonically with $v$ when $v>0$. Since $\gamma>0$ and $u>1$, we have $\gamma u>\gamma$, and thus $\xi(\gamma u) = \frac{\gamma u\cdot f(\gamma u)}{\int_{\gamma u}^{+\infty} f(x) dx}$ is greater than $\xi(\gamma)=\frac{\gamma\cdot f(\gamma)}{\int_{\gamma}^{+\infty} f(x) dx}$. Therefore, $\alpha$ can be set as $\xi(\gamma)$ and then $\rho^*$ is always bounded by $1/{c^\alpha}$ when $w_0=2\gamma c^2$. 
\end{proof}
$\xi(\gamma)>1$ holds when $\gamma>0.7518$, which subsequently provides $\rho^*$ a bound smaller than $1/c$.
The value of $\alpha$ increases with $w_0$, and $\rho^*$ approaches to $0$ when $w_0$ approaches to infinity. That is, the query cost can be very small when $w_0$ is large enough.
However, a large bucket size implies a very large $K$ in order to reduce the number of false positives, so $w_0$ should typically be set to a similar interval range as in other $(K,L)$-index methods. 
Recall that LSB \cite{DBLP:conf/sigmod/TaoYSK09} sets the bucket size to $16$ with approximate ratio $c=2$, we can equivalently set $\gamma = 2$ (\ie $w_0= 4c^2$) to make $w_0$ also be $16$ when $c=2$. Then, according to Lemma \ref{le:rho}, $ \alpha=4.746$ \cbl{and the bound is $1/c^{4.746}$} as compared to the bound of $1/c$ in \cite{DBLP:conf/sigmod/TaoYSK09}. 
\cbl{Note that $\alpha$ can be less than $1$ when $\gamma<0.7518$. In this case, $1/c^\alpha$ no longer seems to be a better bound than $1/c$. However, it will not necessarily lead to $\rho^* > \rho$. Figure \ref{fig:rho_small_w} gives an example that $\rho^* < \rho$ when $\alpha < 1$. By setting $w=0.4c^2$,  $\rho$ exceeds $1/c$ when $c<2$, which means it is not bounded by $1/c$, while $\rho^*$ is always bounded by $1/c^\alpha$ and smaller than $\rho$.  The main reason is that $1/c$ is just an asymptotic bound of $\rho$ approachable only by a very large bucket size, while $1/c^\alpha$ is a non-asymptotic result and $\rho^*$ is always much smaller than $1/c^\alpha$. Besides, it is not necessary to set $\gamma < 0.7518$, since it implies a very large value of $L$. For example, if $w$ is close to $0$, $L$ will be $O(n)$ which makes $(K, L)$-index based methods unpractical. 
Figure \ref{fig:rho_large_w} gives a clear comparison for the decided advantage of $\rho^*$ over $\rho$ by setting  
a reasonable value $w=4c^2$. $\rho$ is very close to $1/c$, while $\rho^*$ has a much smaller bound and decreases rapidly to $0$.}

\begin{figure}[htbp]
    \centering
    \begin{minipage}[c]{0.8\linewidth}
		\centering
		\includegraphics[width=.7\textwidth]{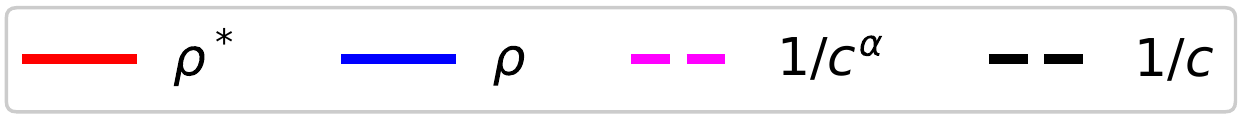}
	\end{minipage}
	\subfigure[$w=0.4c^2$]{
		\begin{minipage}[c]{0.45\linewidth}
			\centering
			\includegraphics[width=1\textwidth]{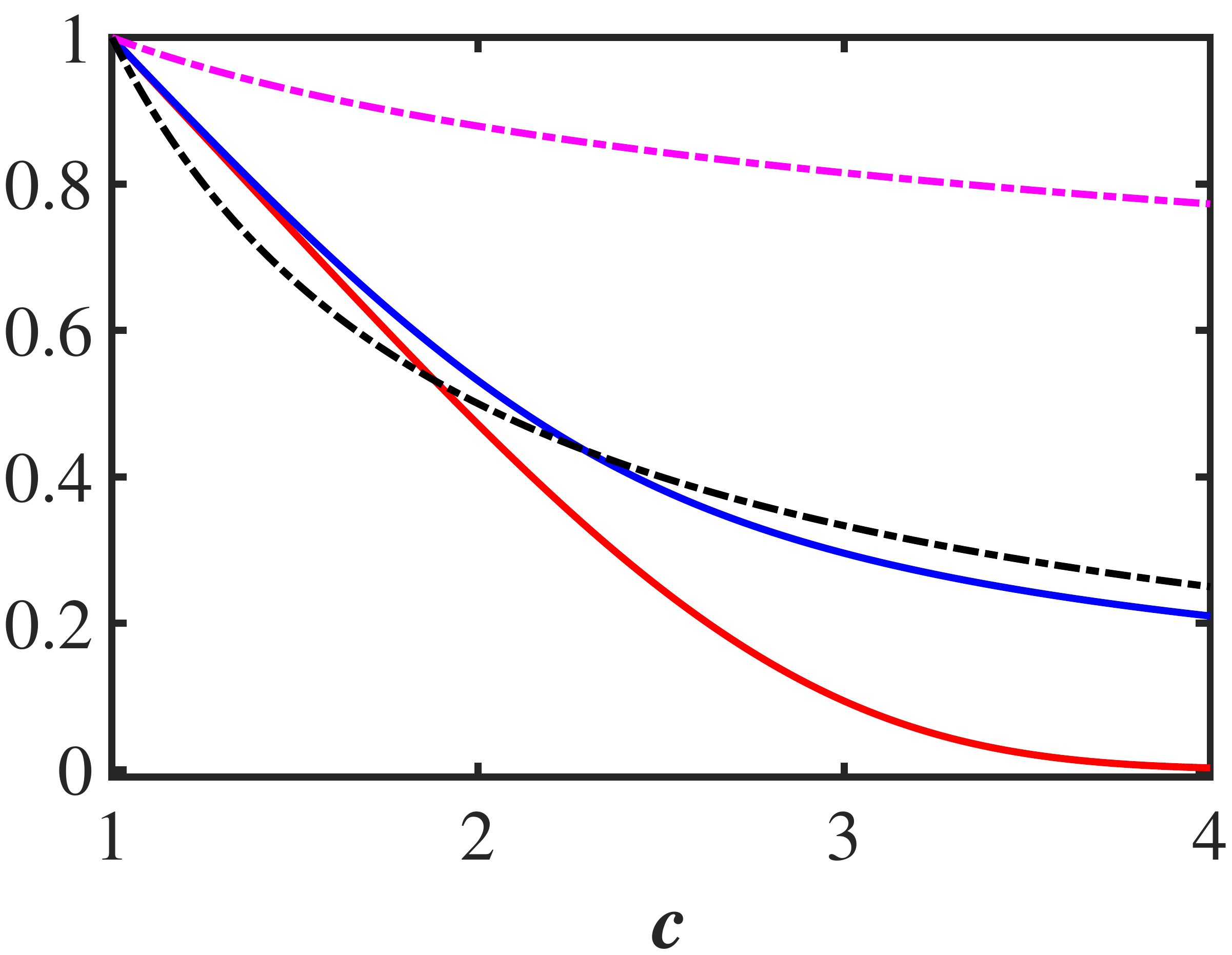}
			\label{fig:rho_small_w}
		\end{minipage}
	}
	\subfigure[$w=4c^2$]{
		\begin{minipage}[c]{0.45\linewidth}
			\centering
			\includegraphics[width=1\textwidth]{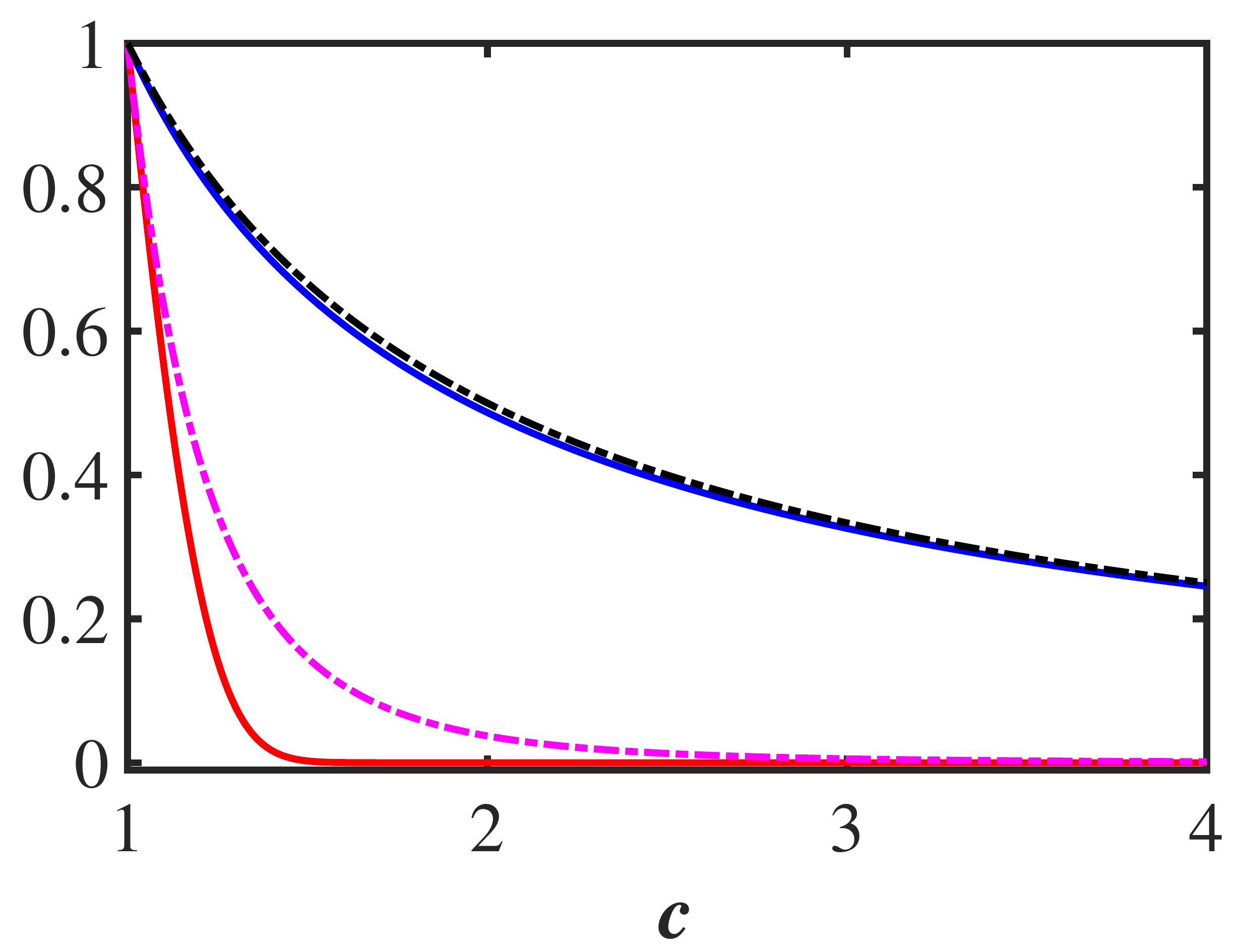}
			\label{fig:rho_large_w}
		\end{minipage}
	}
	\caption{$\rho^*$ v.s. $\rho$}\vspace{-0.5em}
\end{figure}

\subsection{Complexity Analysis} 
Similar to other $(K,L)$-index based methods whose time complexity and space complexity are affected by $\rho$, the complexities of DB-LSH are affected by $\rho^*$. 

\begin{theorem}
	DB-LSH answers a $c^2$-ANN query in $O(n^{\rho^*} d\log n)$ time and $O(n^{1+\rho^*}\log n)$ index size, where $\rho^*$ is bounded by $1/c^{\alpha}$ and smaller than $\rho$ defined in static $(K,L)$-index methods.
\end{theorem}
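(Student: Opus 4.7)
The plan is to split the theorem into four sub-claims and attack them in turn: (i) the $O(n^{1+\rho^*}\log n)$ index size, (ii) the $O(n^{\rho^*}d\log n)$ query cost, (iii) the bound $\rho^* \le 1/c^\alpha$, and (iv) the comparison $\rho^* < \rho$. The last two are essentially immediate from earlier material in the excerpt, so the bulk of the work is in (i) and (ii).

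For (i), I would count the storage of the $L$ R$^*$-trees directly. Each tree indexes the $n$ projected images of $\mathcal{D}$ in a $K$-dimensional space and occupies $O(nK)$ space, with $K = \log_{1/p_2}(n/t) = O(\log n)$ and $L = (n/t)^{\rho^*}$. Multiplying gives $O(LnK) = O(n^{1+\rho^*}\log n)$, which dominates the $O(nd)$ storage of the raw dataset and the $O(LKd)$ storage of the hash vectors in the usual regime. For (ii), I would use the geometric-radius argument of E2LSH on top of Lemma \ref{le:rc-nn}. By event \textbf{E1}, Algorithm \ref{alg:c-ann} terminates at a radius $r \le c r^*$, so only $O(\log_c r^*)$ invocations of Algorithm \ref{alg:rc-nn} fire, with earlier rounds dominated by the last. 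Within one invocation, the cost splits into (a) computing the $L$ compound hashes $G_i(q)$ at total cost $O(LKd) = O(n^{\rho^*}d\log n)$, (b) running $L$ window queries on R$^*$-trees, and (c) verifying candidates, which event \textbf{E2} caps at $2tL+1$ and thus costs $O(Ld) = O(n^{\rho^*}d)$. Assuming that the multi-dimensional index answers a window query in time $O(K + m)$ for output size $m$, the window traversal is dominated by the hashing term; caching the $G_i(q)$ values across radii absorbs the outer doubling loop into the final bound $O(n^{\rho^*}d\log n)$.

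For (iii), Lemma \ref{le:rho} directly gives $\rho^* \le 1/c^\alpha$ with $\alpha = \gamma f(\gamma)/\!\int_\gamma^\infty f(x)\,dx$, so this step is immediate. For (iv), I would compare the two collision-probability formulas at a common bucket width $w$: the dynamic-bucket expression in Equation \ref{eq:cp2} yields a more favourable ratio $\rho^* = \ln(1/p_1)/\ln(1/p_2)$ than the static expression arising from Equation \ref{eq:lsh}, because the query-centric threshold $|h(o_1)-h(o_2)| \le w/2$ suppresses the linear penalty term $(1-t/w)$ present in the static formula, enlarging $p_1$ more than $p_2$. This can be verified either by a direct algebraic inequality on the two integrals or, as the authors already illustrate, by the numerical comparison in Figures \ref{fig:rho_small_w} and \ref{fig:rho_large_w}.

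The main obstacle in the whole argument is step (b): R$^*$-trees do not come with clean worst-case window-query bounds, so the cleanest route is to state (ii) for an idealized output-sensitive multi-dimensional range index and note that the R$^*$-tree is its practical realization. A secondary subtlety is to prevent the $O(\log_c r^*) = O(\log n)$ doubling loop from introducing an extra logarithmic factor; this is resolved by the caching trick above together with the observation that the per-round non-hashing work is only $O(Ld)$, which, summed over the $O(\log n)$ rounds, remains subsumed by the hashing term $O(n^{\rho^*}d\log n)$.
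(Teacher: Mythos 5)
Your proposal is correct and follows essentially the same decomposition as the paper's proof: index size $O(KL\cdot n)$ with $K=O(\log n)$ and $L=O(n^{\rho^*})$, and query cost split into hashing the query point ($O(KLd)$), retrieving at most $2tL+1$ candidates via the multi-dimensional index, and verifying each at cost $O(d)$, with the hashing term dominating. You are in fact slightly more careful than the paper on two points it glosses over, namely the $O(\log_c r^*)$ outer loop over radii and the absence of a clean worst-case window-query bound for R$^*$-trees (the paper simply asserts $O(\log n)$ per retrieved candidate).
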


\begin{proof}
It is obvious that $K=O(\log n)$ and $L=O(n^{\rho^*})$. Therefore, the index size is $O(KL\cdot n)=O(n^{1+\rho^*}\log n)$. In DB-LSH, we need to first compute $K \times L$ hash values of query point, the computational cost of which is $O(KL\cdot d)=O(n^{\rho^*} d\log n)$. When finding candidates, it takes $O(\log n)$ time to find a candidate using R$^*$-Trees. Since we need to retrieve at most $2tL$ candidate points, the cost of generating candidates is $O(\log n \cdot 2tL)=O(n^{\rho^*}\log n)$. In the verification phase, each candidate point spends $O(d)$ time on distance computation, so the total verification cost is $O(2tL\cdot d)=O(n^{\rho^*}d)$. Therefore, the query time of DB-LSH is bounded by $O(n^{\rho^*} d\log n)+O(n^{\rho^*}\log n)+O(n^{\rho^*}d)=O(n^{\rho^*} d\log n)$.
\end{proof}

\section{Experimental study}\label{sec:experiment}

We implement DB-LSH\footnote{https://github.com/Jacyhust/DB-LSH} and the competitors in C++ compiled in a single thread with g++ using O3 optimization. 
All experiments are conducted on a server running 64-bit Ubuntu 20.04 with 2 Intel(R) Xeon(R) Gold 5218 CPUs @ 2.30GHz and 254 GB RAM.

\subsection{Experimental Settings}
\noindent \textbf{Datasets and Queries.} We employ 10 real-world datasets varying in cardinality, dimensionality and types,
which are used widely in existing LSH work \cite{DBLP:conf/icde/LuK20,DBLP:conf/sigmod/LeiHKT20,DBLP:conf/sigmod/Li0ZY020,DBLP:journals/pvldb/LuWWK20,DBLP:journals/pvldb/ZhengZWHLJ20}. \cbl{For the sake of fairness, we make sure  that each dataset is used by at least one of our competitors.} Table \ref{tab:datasets} summarizes the statistics of the datasets. Note that both SIFT10M and SIFT100M consist of points randomly chosen from SIFT1B dataset\footnote{http://corpus-texmex.irisa.fr/}. 
For queries, we randomly select 100 points as queries and remove them from the datasets.

\begin{table}[t]
\caption{Summary of Datasets}
\label{tab:datasets}
\small
\begin{tabular}{|c|c|c|c|}
\hline
\textbf{Datasets}          &\textbf{Cardinality} & \textbf{Dim.} & \textbf{Types}   \\
\hline
\textbf{Audio}   & 54,387   & 192             & Audio           \\
\hline
\textbf{MNIST}   & 60,000   & 784             & Image           \\
\hline
\textbf{Cifar}   & 60,000   & 1024             & Image           \\
\hline
\textbf{Trevi}   & 101,120   & 4096            & Image           \\
\hline
\textbf{NUS}   & 269,648   & 500             & SIFT Description   \\
\hline
\textbf{Deep1M}   & 1,000,000   & 256            & DEEP Description   \\
\hline
\textbf{Gist}   & 1,000,000   & 960             & GIST Description   \\
\hline
\textbf{SIFT10M}   & 10,000,000   & 128       & SIFT Description   \\
\hline
\textbf{TinyImages80M}   & 79,302,017   & 384   & GIST Description   \\
\hline
\textbf{SIFT100M}   & 100,000,000   & 128       & SIFT Description   \\
\hline
\end{tabular}\vspace{-1em}
\end{table}

\noindent \textbf{Competitors.} \cbl{We compare DB-LSH with 5 LSH methods} as mentioned in Section \ref{relatedwork}, \ie LCCS-LSH \cite{DBLP:conf/sigmod/LeiHKT20}, PM-LSH \cite{DBLP:journals/pvldb/ZhengZWHLJ20}, VHP \cite{DBLP:journals/pvldb/LuWWK20} and R2LSH \cite{DBLP:conf/icde/LuK20} and \cbl{LSB-Forest \cite{DBLP:conf/sigmod/TaoYSK09}}. LCCS-LSH \cbl{adopts a query-oblivious LSH indexing strategy with a novel search framework}. PM-LSH is a typical dynamic MQ method that adopts PM-Tree to index the projected data. R2LSH and VHP are representative C2 methods that improve QALSH from the perspective of search regions. \cbl{LSB-Forest is a static $(K, L)$-index method that can answer $ c $-ANN queries for any $ c > 1 $ with only one suit of indexes.}
In addition, to study the effectiveness of query-centric dynamic bucketing strategy in DB-LSH, we design a static $(K, L)$-index method called Fixed Bucketing-LSH (FB-LSH) by replacing the dynamic bucketing part in DB-LSH with the fixed bucketing. Note that FB-LSH is not equivalent to E2LSH since only one suit of $(K,L)$-index is used.

\begin{table*}[]
\centering
\caption{Performance Overview}
\label{tab:overview}
\setlength{\tabcolsep}{3.5mm}{
\begin{tabular}{|cc|c|c|c|c|c|c|c|}
\hline
\multicolumn{2}{|c|}{}                                                    & \textbf{DB-LSH} & \textbf{FB-LSH} & \textbf{LCCS-LSH} & \textbf{PM-LSH} & \textbf{R2LSH} & \textbf{VHP} &  \textbf{LSB-Forest}   \\ \hline
\multicolumn{1}{|c|}{}                                & Query Time (ms)   & \textbf{4.962}  & 5.434           & 5.797             & 5.459           & 8.748          & 11.32                               & 18.52            \\ \cline{2-9} 
\multicolumn{1}{|c|}{}                                & Overall Ratio     & \textbf{1.003}  & 1.008           & 1.006             & 1.003           & 1.005          & 1.006                               & 1.005            \\ \cline{2-9} 
\multicolumn{1}{|c|}{}                                & Recall            & \textbf{0.9268} & 0.8512          & 82.04             & 0.9212          & 0.868          & 0.8580                              & 0.4676           \\ \cline{2-9} 
\multicolumn{1}{|c|}{\multirow{-4}{*}{\textbf{Audio}}}         & Indexing Time (s) & \textbf{0.099}  & 0.164           & 2.126             & 0.166           & 2.764          & 1.626                               &  19.55            \\ \hline\hline
\multicolumn{1}{|c|}{}                                & Query Time (ms)   & \textbf{7.684}  & 9.304           & 19.89             & 13.87           & 12.95          & 15.37                               & 37.35            \\ \cline{2-9} 
\multicolumn{1}{|c|}{}                                & Overall Ratio     & \textbf{1.005}  & 1.018           & 1.007             & 1.005           & 1.005          & 1.008                               &1.010            \\ \cline{2-9} 
\multicolumn{1}{|c|}{}                                & Recall            & \textbf{0.9130} & 0.7580          & 0.8038            & 0.9098          & 0.8756         & 0.8426                              &0.3734           \\ \cline{2-9} 
\multicolumn{1}{|c|}{\multirow{-4}{*}{\textbf{MNIST}}}         & Indexing Time (s) & \textbf{0.149}  & 0.192           & 1.942             & 0.189           & 6.231          & 5.457                               & 92.26            \\ \hline\hline
\multicolumn{1}{|c|}{}                                & Query Time (ms)   & \textbf{12.54}  & 16.37           & 17.66             & 17.53           & 21.81          & 19.31                               & 59.66           \\ \cline{2-9} 
\multicolumn{1}{|c|}{}                                & Overall Ratio     & \textbf{1.002}  & 1.006           & 1.006             & 1.004           & 1.003          & 1.014                               &  1.010            \\ \cline{2-9} 
\multicolumn{1}{|c|}{}                                & Recall            & \textbf{0.9156} & 0.8018          & 0.7150            & 0.8742          & 0.8784         & 0.6322                              &  0.1496           \\ \cline{2-9} 
\multicolumn{1}{|c|}{\multirow{-4}{*}{\textbf{Cifar}}}         & Indexing Time (s) & \textbf{0.149}  & 0.209           & 1.941             & 0.199           & 8.261          & 6.844                               &  146.27           \\ \hline
\multicolumn{1}{|c|}{}                                & Query Time (ms)   & \textbf{48.20}  & 61.74           & 113.7             & 52.23           & 53.10          & 176.47                              & 271.56           \\ \cline{2-9} 
\multicolumn{1}{|c|}{}                                & Overall Ratio     & \textbf{1.001}  & 1.010           & 1.003             & 1.002           & 1.003          & 1.003                               &  1.007            \\ \cline{2-9} 
\multicolumn{1}{|c|}{}                                & Recall            & \textbf{0.9338} & 0.6818          & 0.7816            & 0.8918          & 0.8100         & 0.8798                              &  0.1588           \\ \cline{2-9} 
\multicolumn{1}{|c|}{\multirow{-4}{*}{\textbf{Trevi}}}         & Indexing Time (s) & \textbf{0.232}  & 0.374           & 6.572             & 0.386           & 46.08          & 44.05                               &  1347.9           \\ \hline\hline
\multicolumn{1}{|c|}{}                                & Query Time (ms)   & \textbf{36.07}  & 58.75           & 79.15             & 68.38           & 93.13          & 103.33             &  155.72           \\ \cline{2-9} 
\multicolumn{1}{|c|}{}                                & Overall Ratio     & \textbf{1.0008} & 1.011           & 1.004             & 1.011           & 1.012          & 1.010                               &  1.009            \\ \cline{2-9} 
\multicolumn{1}{|c|}{}                                & Recall            & \textbf{0.5532} & 0.4656          & 0.5376            & 0.4637          & 0.4494         & 0.4972                              &  0.1080           \\ \cline{2-9} 
\multicolumn{1}{|c|}{\multirow{-4}{*}{\textbf{NUS}}}           & Indexing Time (s) & \textbf{0.768}  & 1.655           & 40.032            & 1.190           & 23.40          & 15.86                               &  798.45           \\ \hline\hline
\multicolumn{1}{|c|}{}                                & Query Time (ms)   & \textbf{127.16} & 170.24          & 163.24            & 327.58          & 188.84         & 243.53                              &  377.60           \\ \cline{2-9} 
\multicolumn{1}{|c|}{}                                & Overall Ratio     & \textbf{1.004}           & 1.010           & 1.004             & 1.004           & 1.005          & 1.014                               &  1.003   \\ \cline{2-9} 
\multicolumn{1}{|c|}{}                                & Recall            & \textbf{0.8784} & 0.7376          & 0.8530            & 0.8594          & 0.8354         & 0.5048                              & 0.4524           \\ \cline{2-9} 
\multicolumn{1}{|c|}{\multirow{-4}{*}{\textbf{Deep1M}}}        & Indexing Time (s) & \textbf{5.704}  & 7.856           & 159.41            & 6.141           & 61.79          & 34.57                               &  3498.3           \\ \hline\hline
\multicolumn{1}{|c|}{}                                & Query Time (ms)   & \textbf{164.03} & 265.90          & 335.67            & 339.63          & 288.63         & 384.77                              &  761.02           \\ \cline{2-9} 
\multicolumn{1}{|c|}{}                                & Overall Ratio     & \textbf{1.004}  & 1.007           & 1.003             & 1.006           & 1.010          & 1.016                               &  1.005           \\ \cline{2-9} 
\multicolumn{1}{|c|}{}                                & Recall            & \textbf{0.8098} & 0.7360          & 0.7248            & 0.7566          & 0.6442         & 0.5180                              &  0.2736           \\ \cline{2-9} 
\multicolumn{1}{|c|}{\multirow{-4}{*}{\textbf{Gist}}}          & Indexing Time (s) & \textbf{6.056}  & 7.811           & 178.74            & 8.038           & 139.93         & 105.98                              & 11907            \\ \hline\hline
\multicolumn{1}{|c|}{}                                & Query Time (ms)   & \textbf{963.17} & 2633.9          & 2774.66           & 1922.4          & 3998           & 9723.4                              & 2667.9           \\ \cline{2-9} 
\multicolumn{1}{|c|}{}                                & Overall Ratio     & \textbf{1.001}  & 1.002           & 1.002             & 1.001           & 1.001          & 1.006                               & 1.001            \\ \cline{2-9} 
\multicolumn{1}{|c|}{}                                & Recall            & \textbf{0.9602} & 0.9420          & 0.9192            & 0.9469          & 0.9560         & 0.8248                              &  0.7206           \\ \cline{2-9} 
\multicolumn{1}{|c|}{\multirow{-4}{*}{\textbf{SIFT10M}}}       & Indexing Time (s) & \textbf{86.49}  & 123.46          & 159.31            & 101.71          & 506.13         & 263.19                              &  23631            \\ \hline\hline
\multicolumn{1}{|c|}{}                                & Query Time (ms)   & \textbf{14511}  & 28854           & 21101             & 29023           & 35396          & 164194                              &  \textbackslash{} \\ \cline{2-9} 
\multicolumn{1}{|c|}{}                                & Overall Ratio     & \textbf{1.002}  & 1.004           & 1.002             & 1.005           & 1.035          & 1.014                               & \textbackslash{} \\ \cline{2-9} 
\multicolumn{1}{|c|}{}                                & Recall            & \textbf{0.8922} & 0.8144          & 0.8384            & 0.8164          & 0.6303         & 0.7720                              &  \textbackslash{} \\ \cline{2-9} 
\multicolumn{1}{|c|}{\multirow{-4}{*}{\textbf{TinyImages80M}}} & Indexing Time (s) & \textbf{1198.9} & 2663.3          & 23911             & 2153.5          & 6508.1         & 4265.1                              &  \textbackslash{} \\ \hline\hline
\multicolumn{1}{|c|}{}                                & Query Time (ms)   & \textbf{7961.6} & 10287           & 25342             & 26724           & 25467          & 163531                              &  \textbackslash{} \\ \cline{2-9} 
\multicolumn{1}{|c|}{}                                & Overall Ratio     & \textbf{1.001}  & 1.009           & 1.004             & 1.001           & 1.019          & 1.006                               &  \textbackslash{} \\ \cline{2-9} 
\multicolumn{1}{|c|}{}                                & Recall            & \textbf{0.9618} & 0.7960          & 0.8568            & 0.9597          & 0.6180         & 0.7980                              &  \textbackslash{} \\ \cline{2-9} 
\multicolumn{1}{|c|}{\multirow{-4}{*}{\textbf{SIFT100M}}}      & Indexing Time (s) & \textbf{1638.1} & 3414.3          & 10912             & 2552.6          & 5404.6         & 3442.9                              &  \textbackslash{} \\ \hline
\end{tabular}}
\vspace{-0.5em}
\end{table*}

\noindent \textbf{Parameter Settings.}  By default, all algorithms are conducted to answer $(c,k)$-ANN queries with $k=50$. 
For DB-LSH, we set the approximation ratio $c=1.5$ and $w=4c^2$. $L$ is fixed as $5$. $K=12$ for the datasets with cardinality greater than 1M and $K=10$ for the rest datasets.
\cbl{Parameter settings of competitors follow the original papers or their source codes. 
Specifically,}
for LCSS-LSH, we set $m=64$ and $\# probes\in\{256,512\}$.
For PM-LSH, we set $c=1.5$ and use $m=15$ hash functions, $\beta=0.08$.
For R2LSH, we are recommended to set $\lambda$, $m$ and $\beta$ to $0.7$, $40$ and $30$.
For VHP, we set $t_0=1.4$ and $m=60$ for the datasets except Gist, Trevi and Cifar. For these three datasets, $m$ is set as $80$  since they have much higher dimensionality.
\cbl{For LSB-Forest, we set $B=1024\sim4096\text{KB}$ based on the dimensionality of the datasets. Then $l$ and $m$ can be computed by $l=\sqrt{dn/B}$ and $m=\log_{1/p_2}dn/B$.}
\cbl{To achieve comparable query accuracy with the competitors, we increase the total number of leaf entries in LSB-Forest from $4Bl/d$ to $40Bl/d$.}
For FB-LSH, we set the approximation ratio $c=1.5$ and $w=4c^2$. $K$ is fixed as $5$ and $L$ ranges from $10$ to $12$ based on the cardinality of the datasets.
\ifLONGVERSION
\todo{
}
\fi


\noindent \textbf{Evaluation Metric.} There are five metrics in total. Two metrics are used to evaluate
the indexing performance: namely, index size and indexing time. 
Three metrics are used to evaluate the query performance: query time, overall ratio and recall. For a $(c,k)$-ANN query, let the returned set be $R=\{o_1,\dots,o_k\}$ with points sorted in ascending order of  their distances to the query point and the exact $k$-NN  $R^*=\{o^*_1,\dots,o^*_k\}$, then the overall ratio and recall are defined as follows \cite{DBLP:journals/pvldb/ZhengZWHLJ20}.
\begin{equation}
	\mathit{OverallRatio} = \frac{1}{k} \sum_{i=1}^{k} \frac{\|q,o_i\|}{\|q,o_i^*\|}
\end{equation}
\begin{equation}
	\mathit{Recall}=\frac{\lvert R \cap R^* \rvert}{k}
\end{equation}
We repeatedly conduct each algorithm $10$ times for all $100$ queries and report the average query time, overall ratio and recall. Since \cbl{LSB-Forest,} R2LSH and VHP are disk-based methods, we only take their CPU time as the query time for fairness. 
\cbl{For FB-LSH, we omit the search time for candidates in R$^*$-Tree when computing the query time so as to mimic the fast lookup of candidates through hash tables in static $(K,L)$-index methods. Such time cannot be ignored in DB-LSH.}

\subsection{Performance Overview}
 In this subsection, we provide an overview of \cbl{the} average query time, overall ratio, recall and indexing time of all algorithms \cbl{with default parameter settings} on all datasets, as shown in Table \ref{tab:overview}. \cbl{We do not run LSB-Forest on TinyImages80M and SIFT100M, since their storage consumption is considerably huge (more than 10TB to store the indexes).}
\subsubsection{\textbf{DB-LSH and FB-LSH}}
we first make a brief comparison of DB-LSH and FB-LSH, where the number of hash functions $K \times L$ is set to the same value. The only difference between them is whether a query-centric bucket is used or not. As we can see from Table \ref{tab:overview}, 
DB-LSH saves $10$-$70\%$ of the query time compared to FB-LSH but reaches a higher recall and smaller overall ratio. In other words, DB-LSH achieves better accuracy with higher efficiency. The main reason is that although DB-LSH spends more time searching for candidates in the \cbl{R$^*$-Trees}, the number of required candidates is reduced due to the high quality of candidates in query-centric buckets. 

\subsubsection{\textbf{Indexing Performance}} 
    
The indexing time and index size of all algorithms with the default settings are considered in this set of experiments. 
Since the index size of all algorithms \cbl{except LSB-Forest} can be easily estimated by 
$
    \textit{IndexSize} = n\times \#\textit{HashFunctions}
$, we compare the index size by the number of hash functions used in each algorithm \cbl{as mentioned in the parameter settings} and do not list them again in the Table \ref{tab:overview}.  We can see that the index sizes are close for all algorithms except PM-LSH, which demonstrates that DB-LSH eliminates the space consumption issue in $(K,L)$-index methods. \cbl{In LSB-Forest, data points  are also stored in each indexes, which leads to extremely large space consumption. Besides, the value of $L$ in LSB-Forest is $O(\sqrt{n})$. It  also makes LSB-Forest ill-adapted to  the  large-scale datasets. For example, $L$ reaches to $ 485 $ for Gist and $560$ for SIFT10M.}
For the indexing time, as shown in Table \ref{tab:overview}, we have the following observations: 
(1) DB-LSH achieves the smallest indexing time on all datasets. The reason is twofold. 
First, DB-LSH adopts the bulk-loading strategy to construct R$^*$-Trees, which is a more efficient strategy than conventional insertion strategies.
It takes less time to construct $5$ R$^*$-Trees than PM-LSH to build a PM-Tree.
Second, DB-LSH requires only $5$ indexes, which is much smaller than those in LCCS-LSH, R2LSH and VHP. In addition, R2LSH and VHP have close indexing time since they both adopt B$^+$-Trees as indexes. LCSS has a much longer indexing time than other algorithms due to its complex index structure, CSA. \cbl{The indexing time of LSB-Forest is also very long  because LSB-Forest uses several times the number of indexes than other algorithms.
}
(2) \cbl{The} indexing time is almost determined by the cardinality of the dataset and it increases super-linearly with cardinality in all algorithms.
For example, MNIST and Cifar have the same cardinality and almost the same indexing time. All algorithms take more than 10 times longer to build indexes on dataset SIFT100M than on SIFT10M. It implies that it is time-consuming to construct indexes for very large-scale datasets, and therefore, the smallest indexing time gives DB-LSH a great advantage.

\subsubsection{\textbf{Query Performance}}
In this set of experiments, we study the average query time, recall and overall ratio of all algorithms in the default settings. According to the results shown in Table \ref{tab:overview}, we have the following observations:
(1) DB-LSH offers the best query performance on all datasets. 
\cthe{The} higher recall, smaller overall ratio and shorter query time indicate DB-LSH outperforms all competitor algorithms on both efficiency and accuracy. In particular, on very large-scale datasets TinyImages80M and SIFT100M ($14.5$s and $7.9$s), DB-LSH not only takes just about half query time of PM-LSH, R2LSH, VHP and \cbl{LSB-Forest}, but also reaches a higher accuracy. 
Only LCCS-LSH and FB-LSH achieve \cbl{the} comparable query time  on these two large-scale datasets ($21$s and $10.3$s).
\cbl{The reason DB-LSH achieves the best performances can be concluded as follows: a) compared with query-oblivious methods (LCCS-LSH, LSB-Forest), query-centric methods can obtain higher quality candidates since they address the hash boundary issue; b) compared with other query-centric methods (C2), both MQ and DB-LSH perform better due to the bounded search region; c) compared with MQ that adopts only one index, DB-LSH uses $ L $ indexes to miss fewer exact NNs, and thus achieving better recall and ratio.}
(2) The query accuracy, especially recall, varies with datasets.
\cbl{All algorithms can achieve $80$-$90\%$ recall on most datasets. On NUS, all algorithms perform slightly inferior due to intrinsically complex distribution (that can be quantified by \textit{relative contrast} and \textit{local intrinsic dimensionality} \cite{DBLP:journals/pvldb/ZhengZWHLJ20,DBLP:conf/icml/HeKC12,DBLP:journals/tkde/LiZSWLZL20}), but DB-LSH still has a lead.}
(3) The query performance of VHP and R2LSH are considerably worse than other algorithms on large-scale datasets TinyImages80M and SIFT100M. VHP takes as long as linear scan ($164$s and $163$s) and R2LSH is difficult to reach an acceptable recall ($0.63$ and $0.61$) or overall ratio. 
Therefore, we do not report the results of them on TinyImages80M and SIFT100M in the subsequent experiments.
\cbl{(4) No matter which datasets, LSB-Forest always needs the longest query time to reach a similar accuracy. Its query time grows rapidly with the cardinality and dimensionality of the dataset. As many as $ O(\sqrt{nd}) $ index accesses make LSB-Forest not comparable to others, so we do not report it in the rest experiments.}

\subsection{Evaluation of Query Performance}

\subsubsection{\textbf{Effect of $n$}}

In order to investigate how the dataset cardinality affects the query performance, we randomly pick up $0.2n, 0.4n, 0.6n, 0.8n$ and $ n$ data points from the original dataset and compare \cbl{the} query performance of all algorithms on them \cbl{in the default  parameters.} 
Due to the space limitation, we only report the results on Gist and TinyImages80M, which are representative due to their different cardinality and dimensionality. The comparative results are shown in Figure \ref{expe:n_time}-\ref{expe:n_ratio}. 
 \begin{figure}[htbp]
    \begin{minipage}[c]{0.8\linewidth}
		\centering
		\includegraphics[width=.7\textwidth]{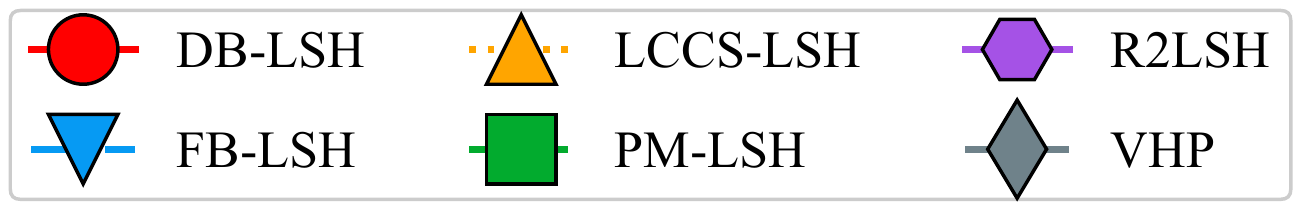}
	\end{minipage}
 	\centering
 	\subfigure[Gist]{
 		\begin{minipage}[c]{0.45\linewidth}
 			\centering
 			\includegraphics[width=1\textwidth]{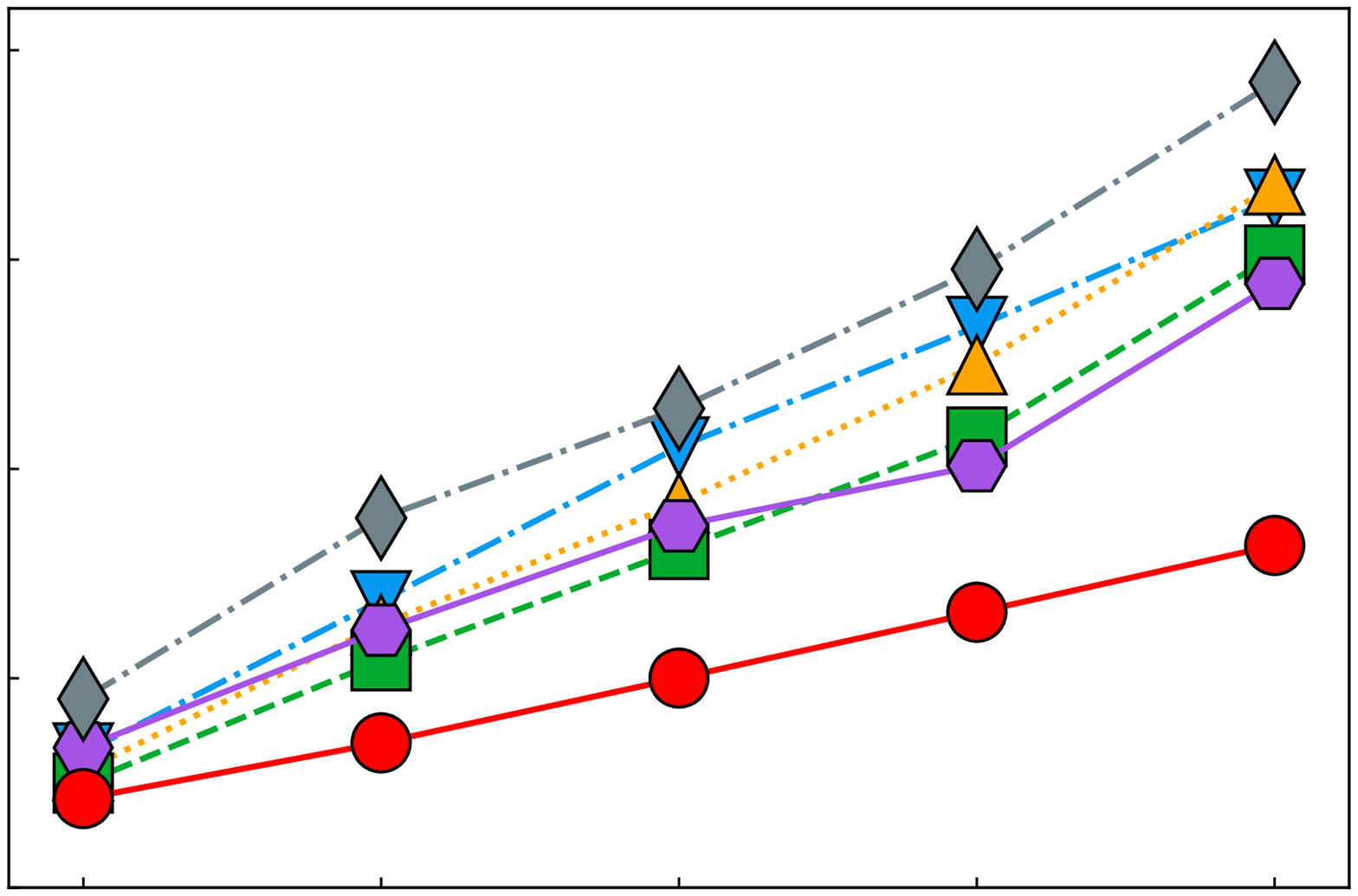}
 			\vspace{-1em}
 		\end{minipage}
 	}
 	\subfigure[TinyImages80M]{
 		\begin{minipage}[c]{0.45\linewidth}
 			\centering
 			\includegraphics[width=1\textwidth]{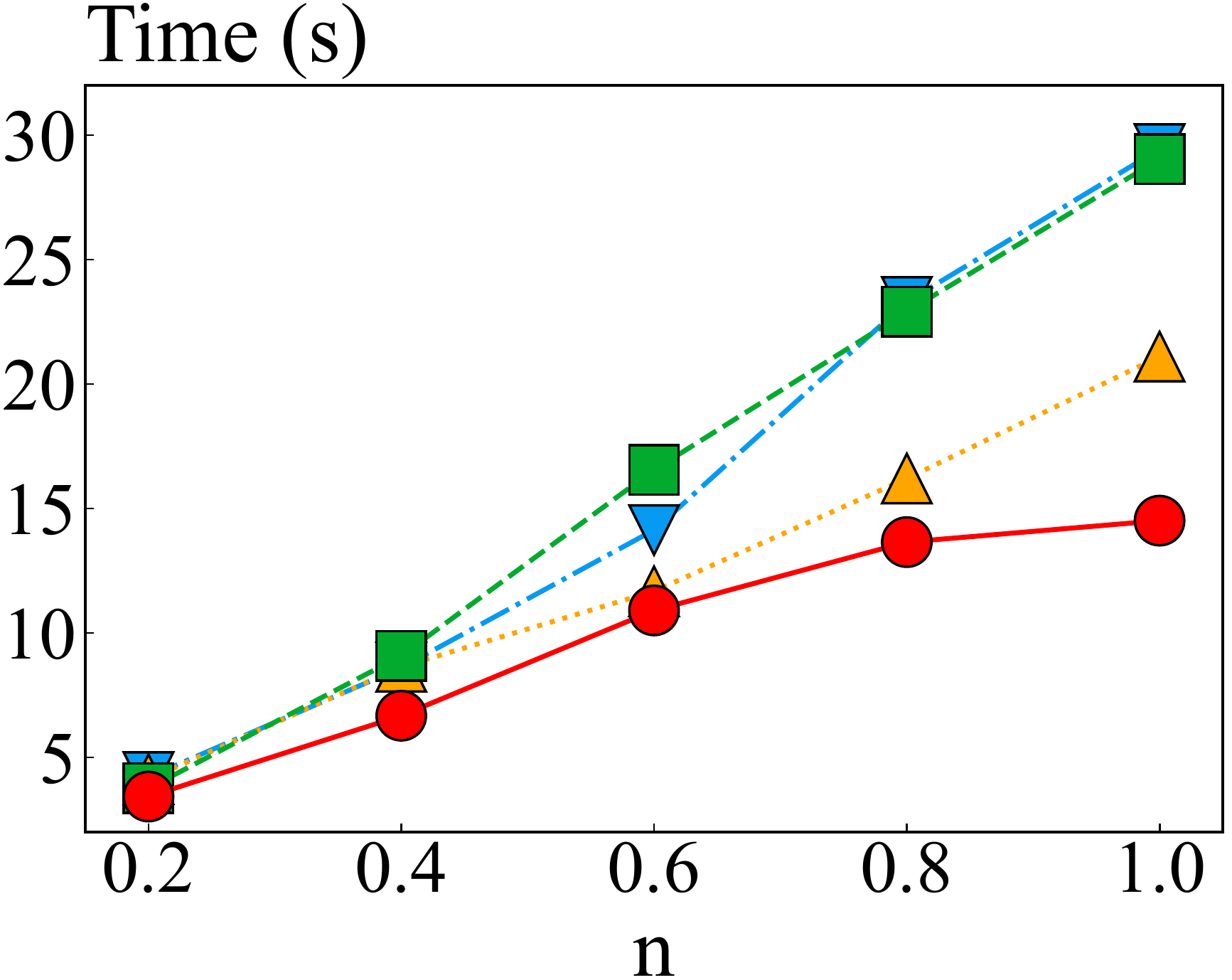}
 			\vspace{-1em}
 		\end{minipage}
 	}
 	\vspace{-0.8em}\caption{Query Time when Varying $n$}\vspace{-0.6em}
 	\label{expe:n_time}
 \end{figure}
 \begin{figure}[htbp]
 	\centering
 	\subfigure[Gist]{
 		\begin{minipage}[c]{0.45\linewidth}
 			\centering
 			\includegraphics[width=1\textwidth]{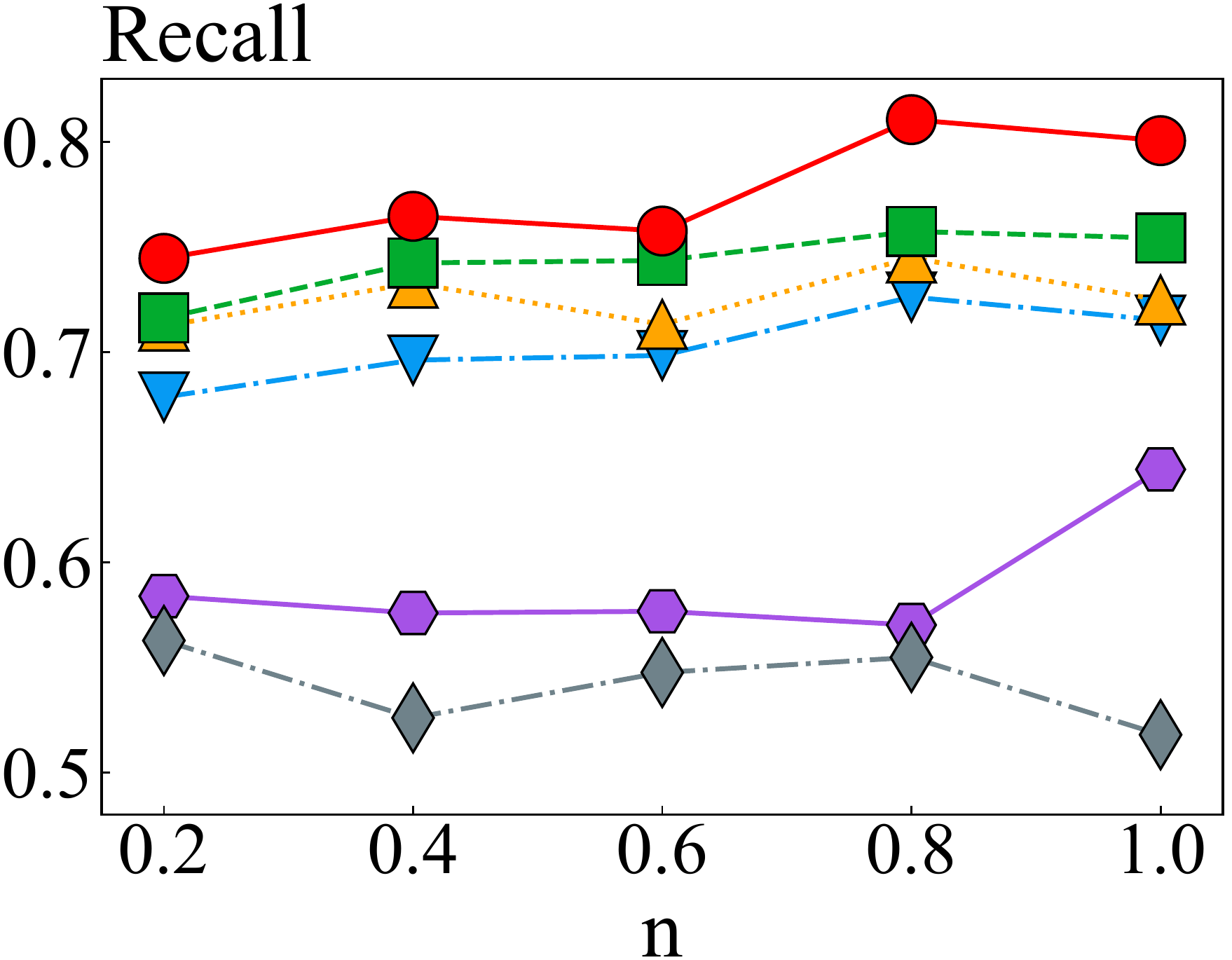}
 			\vspace{-1em}
 		\end{minipage}
 	}
 	\subfigure[TinyImages80M]{
 		\begin{minipage}[c]{0.45\linewidth}
 			\centering
 			\includegraphics[width=1\textwidth]{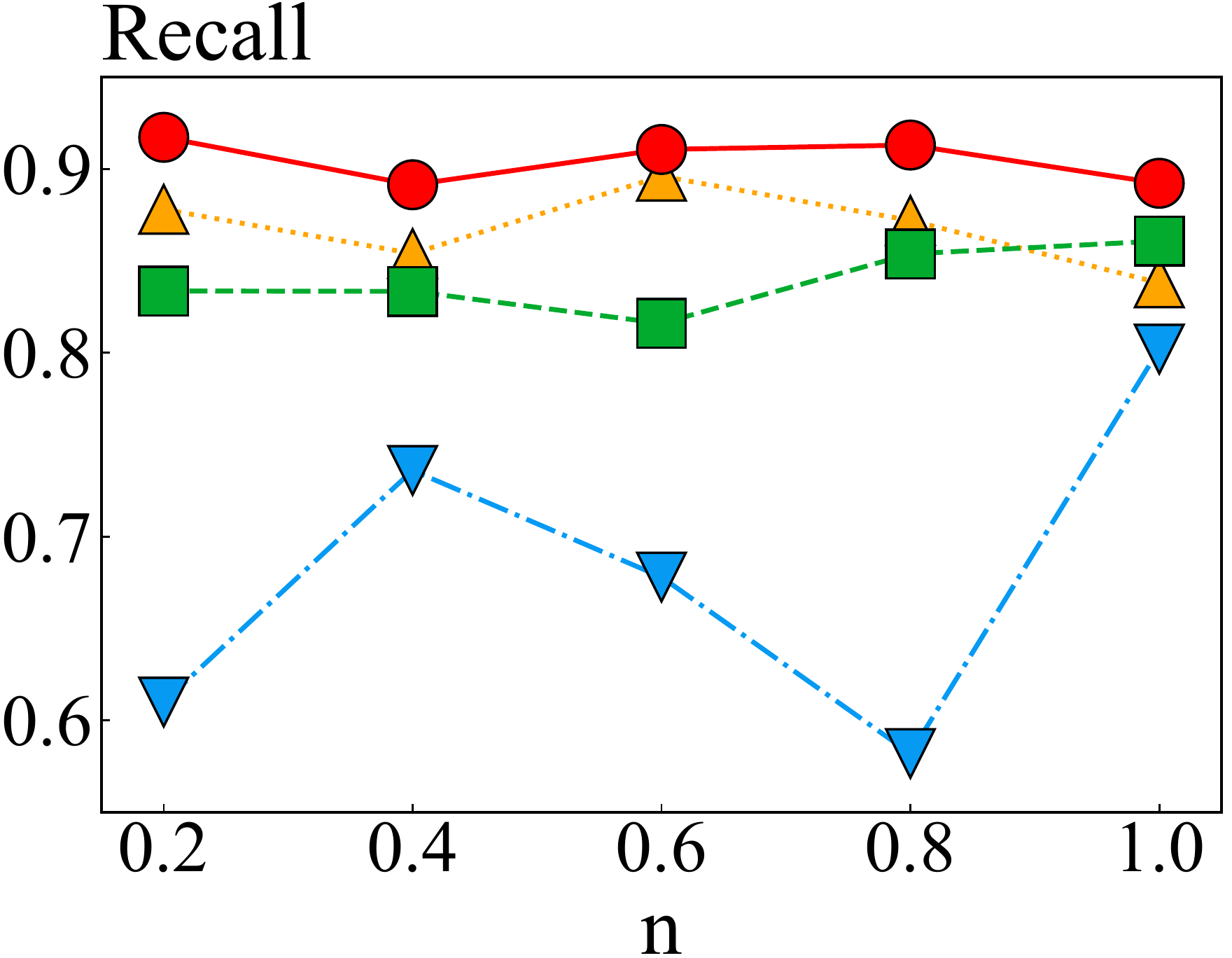}
 			\vspace{-1em}
 		\end{minipage}
 	}
 	\vspace{-0.8em}\caption{Recall when Varying $n$}\vspace{-0.6em}
 	\label{expe:n_recall}
 \end{figure}
 \begin{figure}[!h]
 	\centering
 	\subfigure[Gist]{
 		\begin{minipage}[c]{0.45\linewidth}
 			\centering
 			\includegraphics[width=1\textwidth]{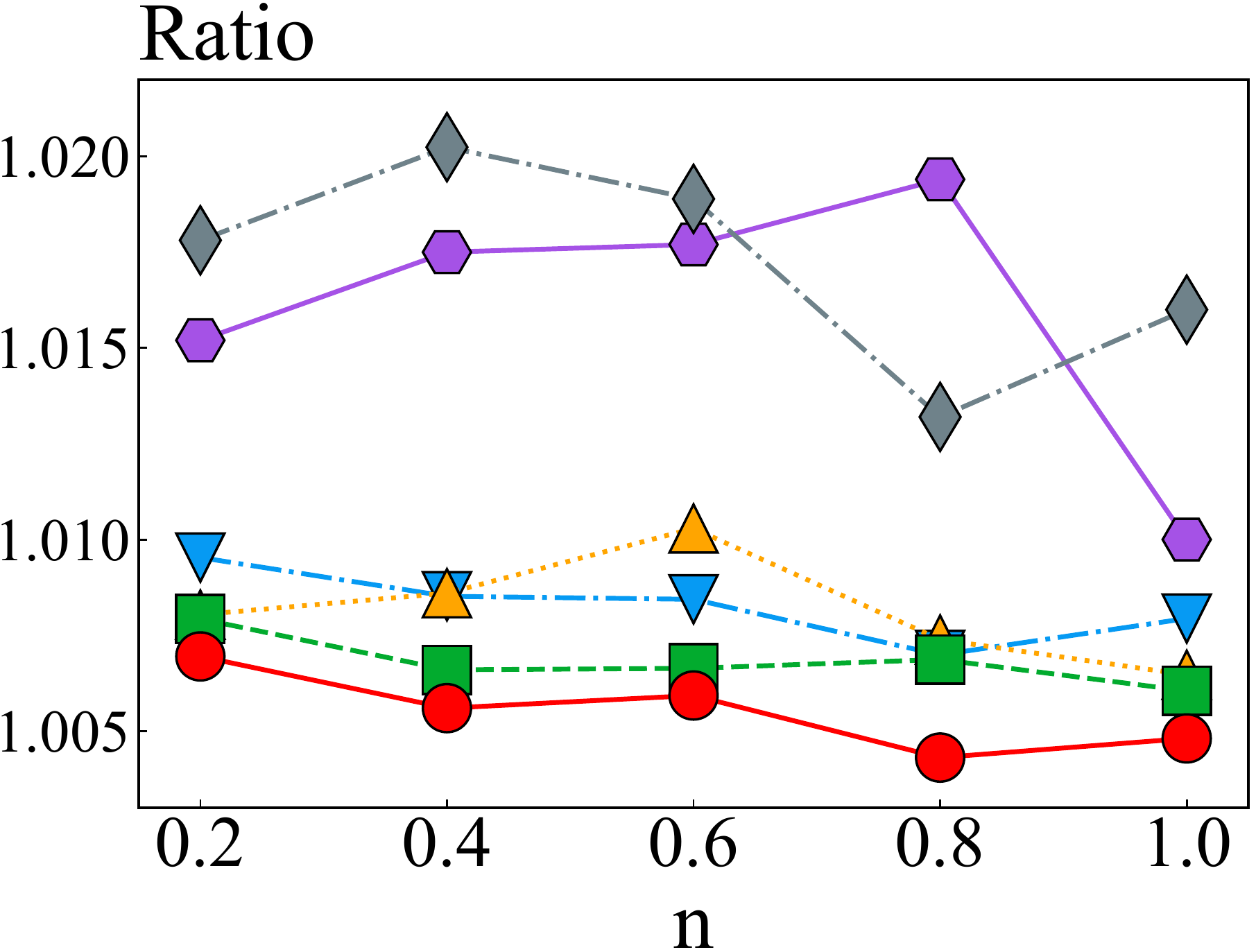}
 			\vspace{-1em}
 		\end{minipage}
 	}
 	\subfigure[TinyImages80M]{
 		\begin{minipage}[c]{0.45\linewidth}
 			\centering
 			\includegraphics[width=1\textwidth]{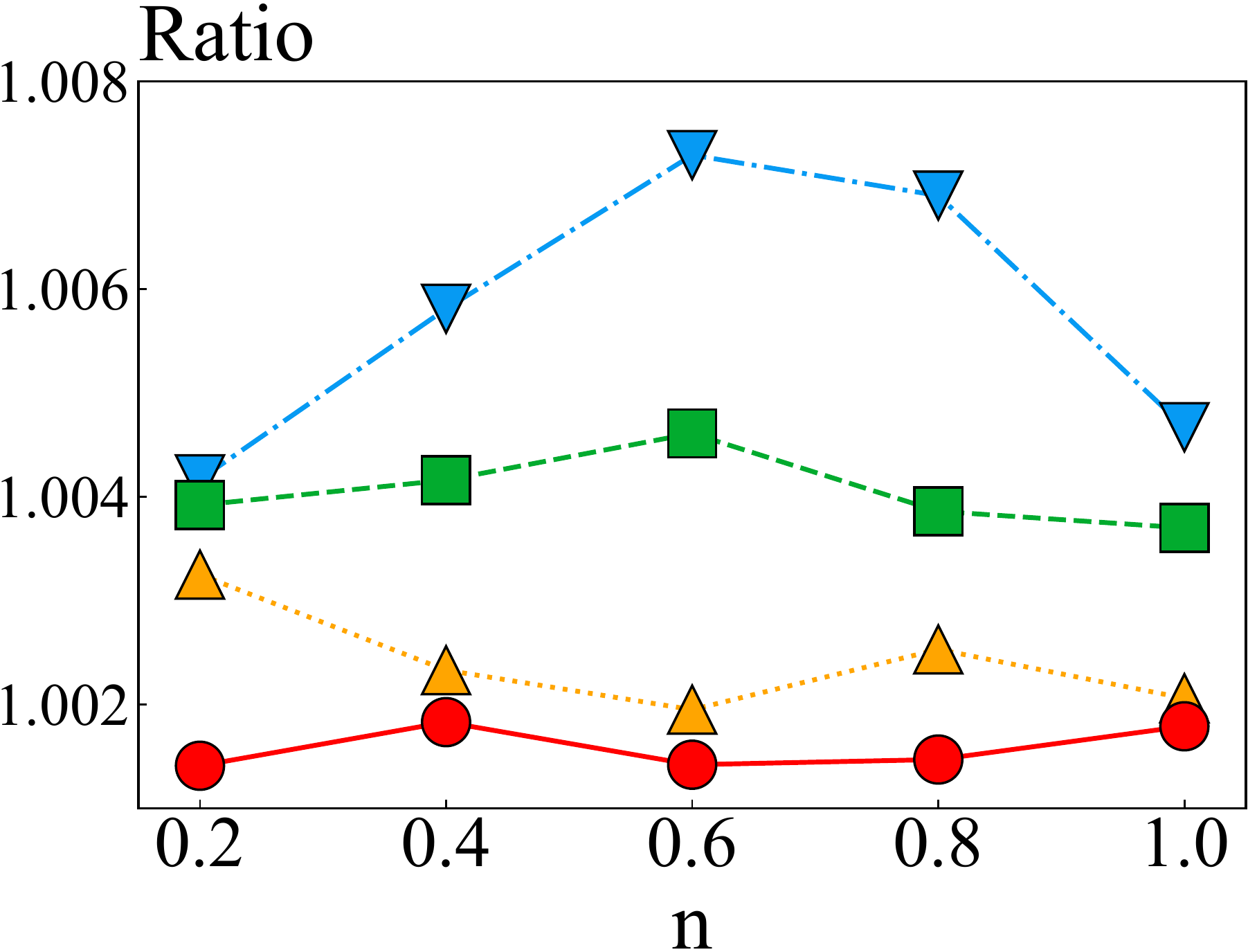}
 			\vspace{-1em}
 		\end{minipage}
 	}
 	\vspace{-0.8em}\caption{Overall Ratio when Varying $n$} \vspace{-0.6em}
 	\label{expe:n_ratio}
 \end{figure}
Clearly, DB-LSH has a lead advantage over all competitors under all evaluation metrics when varying \cthe{the} cardinality. Although the query time increases with \cthe{the} cardinality, DB-LSH grows much slower than other algorithms. The reason is that DB-LSH truly achieves \cthe{a} sub-linear query cost. 
In terms of query accuracy, all algorithms, especially DB-LSH, LCCS-LSH and PM-LSH, achieve relatively stable recall and overall ratio, because query accuracy depends mainly on the data distribution. Although the cardinality increases, the data distribution remains essentially the same, and therefore the accuracy does not change much. The accuracy of FB-LSH is a bit unsteady due to hash boundary issue. As we can see, DB-LSH keeps performing better than all competitor algorithms.

\begin{figure*}[!h]
    \centering
    \begin{minipage}[c]{0.8\linewidth}
		\centering
		\includegraphics[width=.7\textwidth]{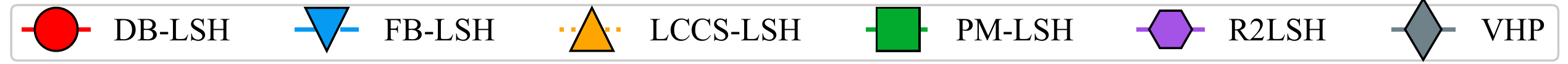}
	\end{minipage}
	\subfigure[Recall on Gist]{
		\begin{minipage}[c]{0.23\linewidth}
			\centering
			\includegraphics[width=1\textwidth]{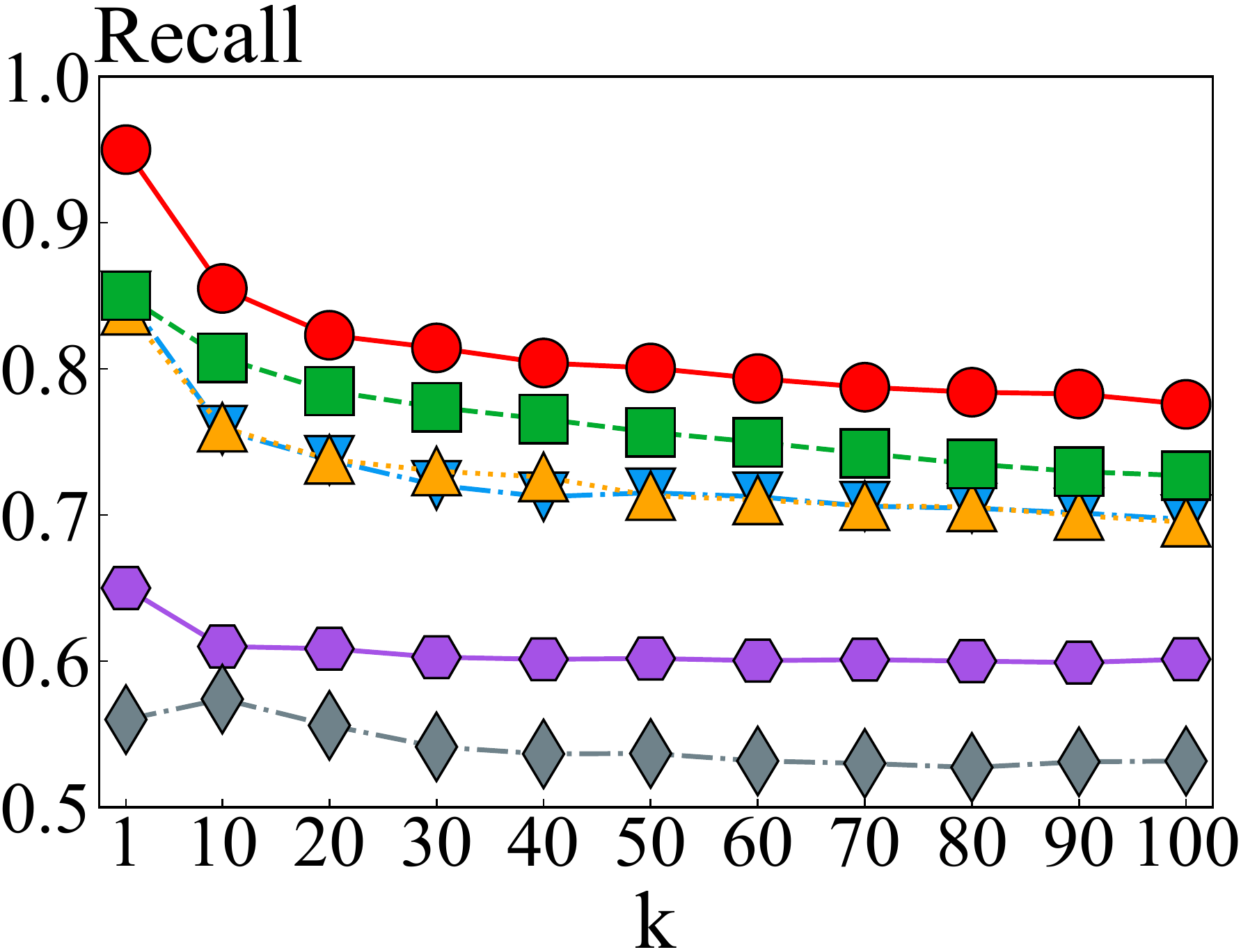}
				\vspace{-1.05em}
		\end{minipage}
	}
	\subfigure[OverRatio on Gist]{
		\begin{minipage}[c]{0.23\linewidth}
			\centering
			\includegraphics[width=1\textwidth]{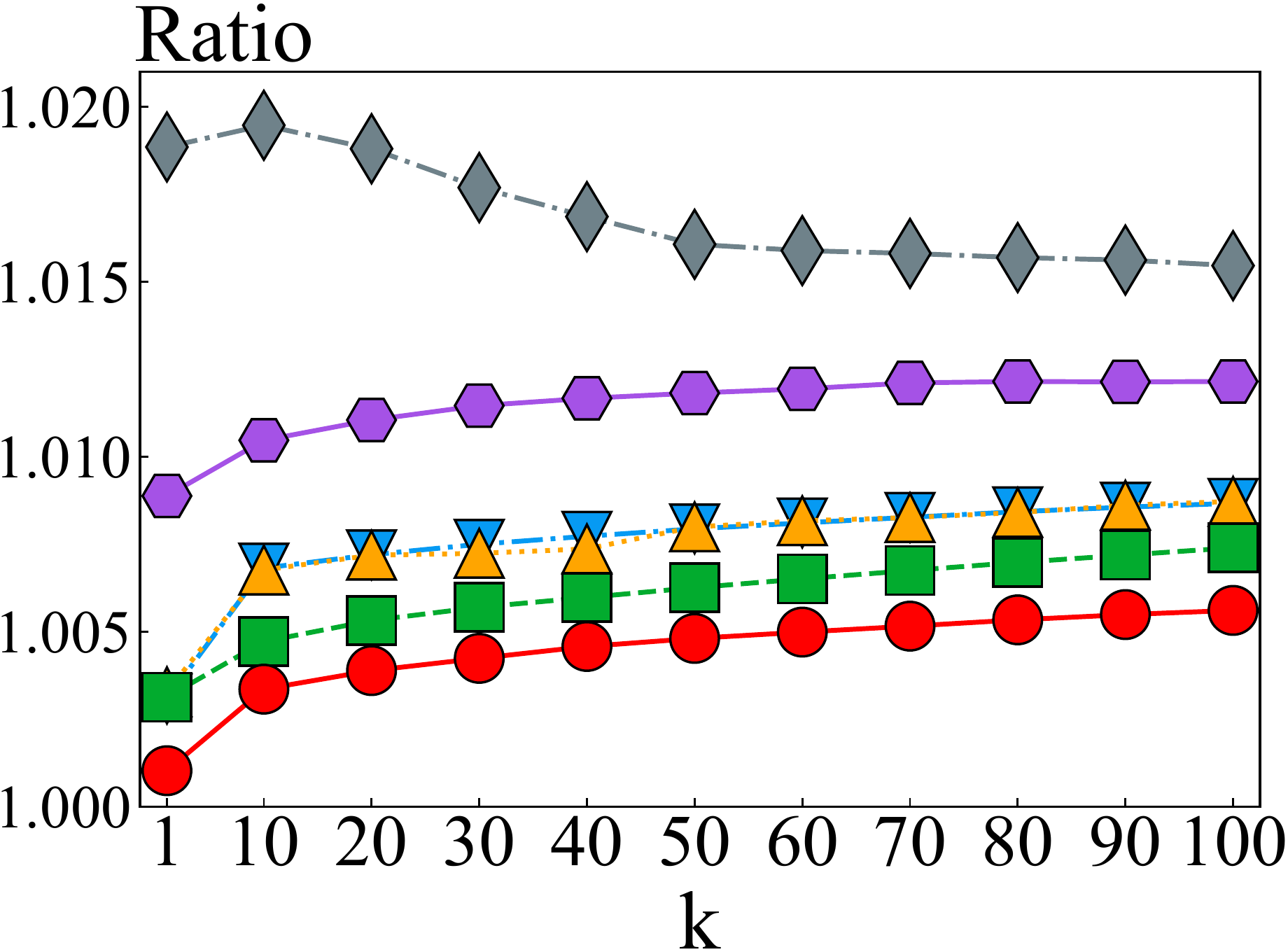}
			\vspace{-1.05em}
		\end{minipage}
	}
	\subfigure[Recall on TinyImages80M]{
		\begin{minipage}[c]{0.23\linewidth}
			\centering
			\includegraphics[width=1\textwidth]{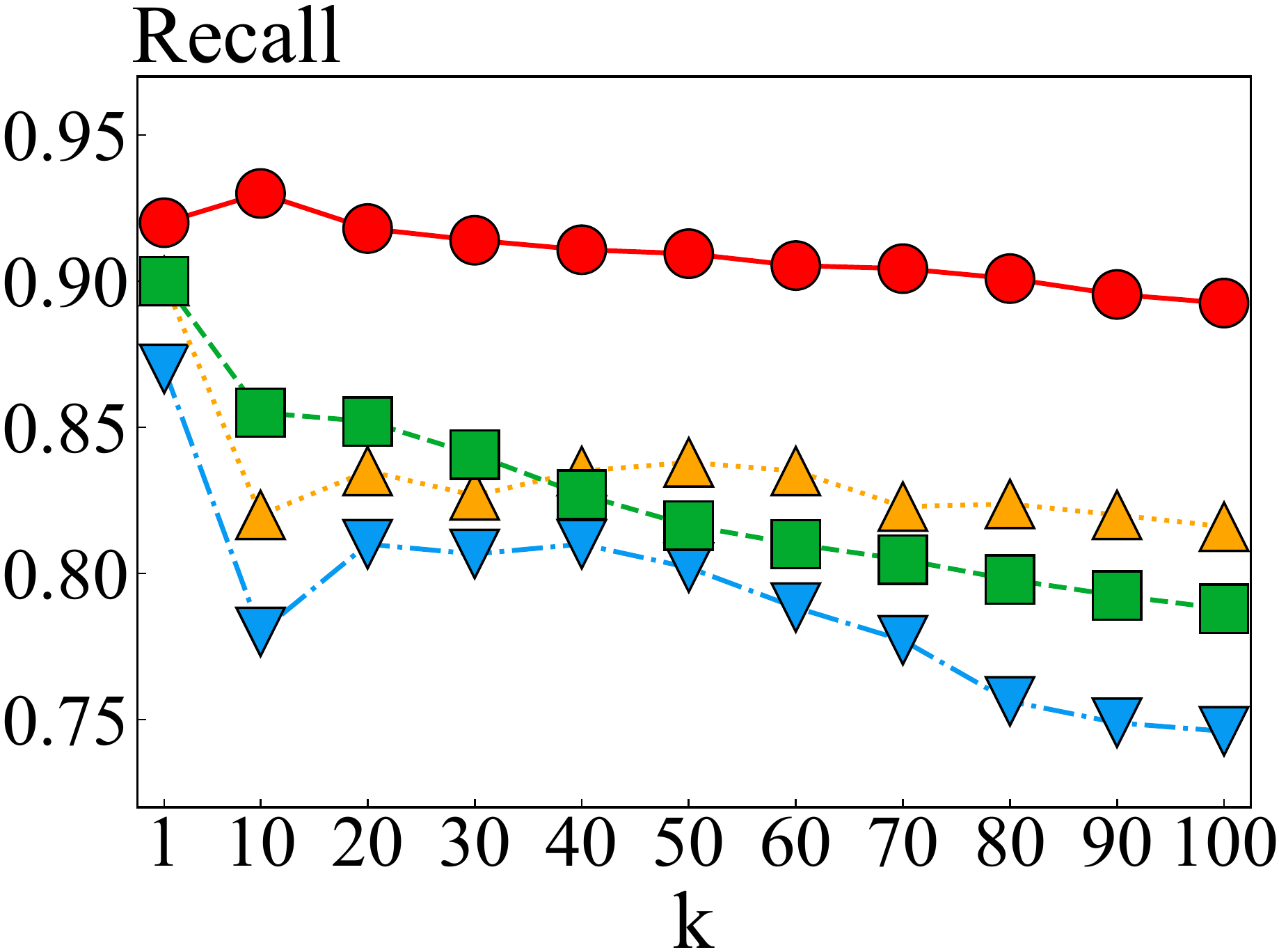}
				\vspace{-1.05em}
		\end{minipage}
	}
	\subfigure[OverRatio on TinyImages80M]{
		\begin{minipage}[c]{0.23\linewidth}
			\centering

			\includegraphics[width=1\textwidth]{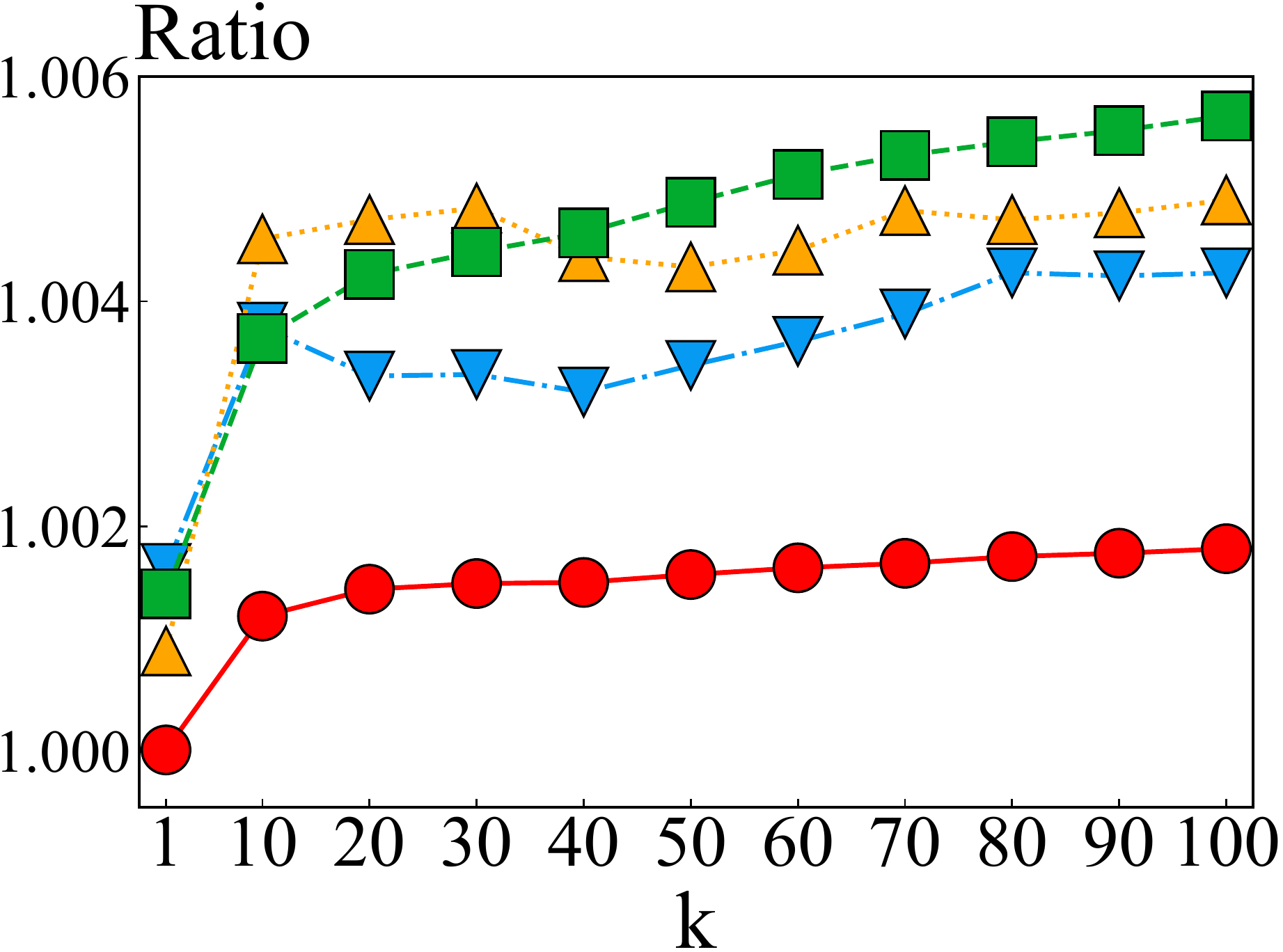}
				\vspace{-1.05em}
		\end{minipage}
	}
	\vspace{-0.4em}\caption{Performance when Varying $k$} \vspace{-1em}
	\label{expe:k}
\end{figure*}

\begin{figure*}[!h]
	\centering
	\subfigure[Trevi]{
		\begin{minipage}[c]{0.23\linewidth}
			\centering
			\includegraphics[width=1\textwidth]{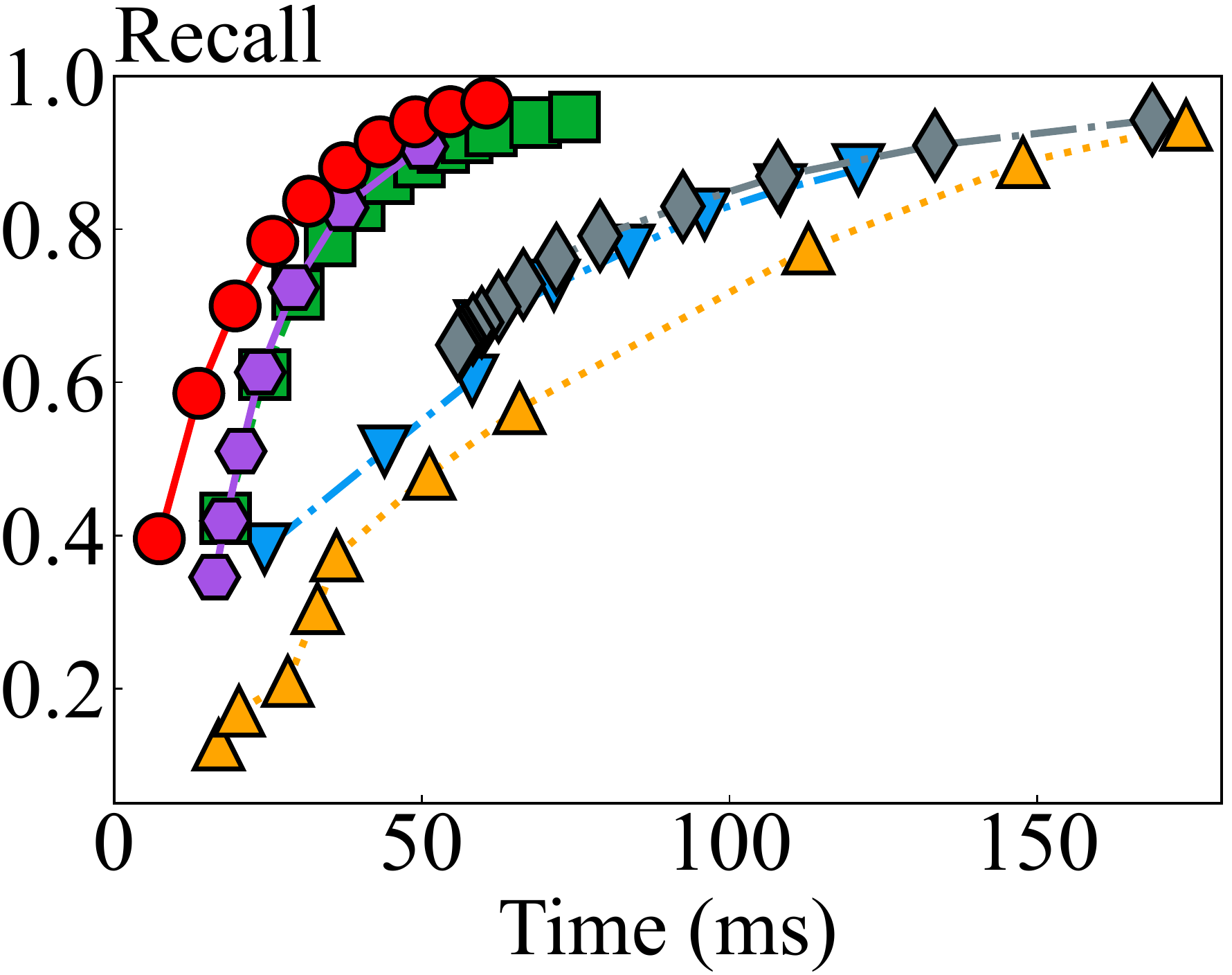}\vspace{0.1em}
		\end{minipage}
	}
	\subfigure[Gist]{
		\begin{minipage}[c]{0.23\linewidth}
			\centering
			\includegraphics[width=1\textwidth]{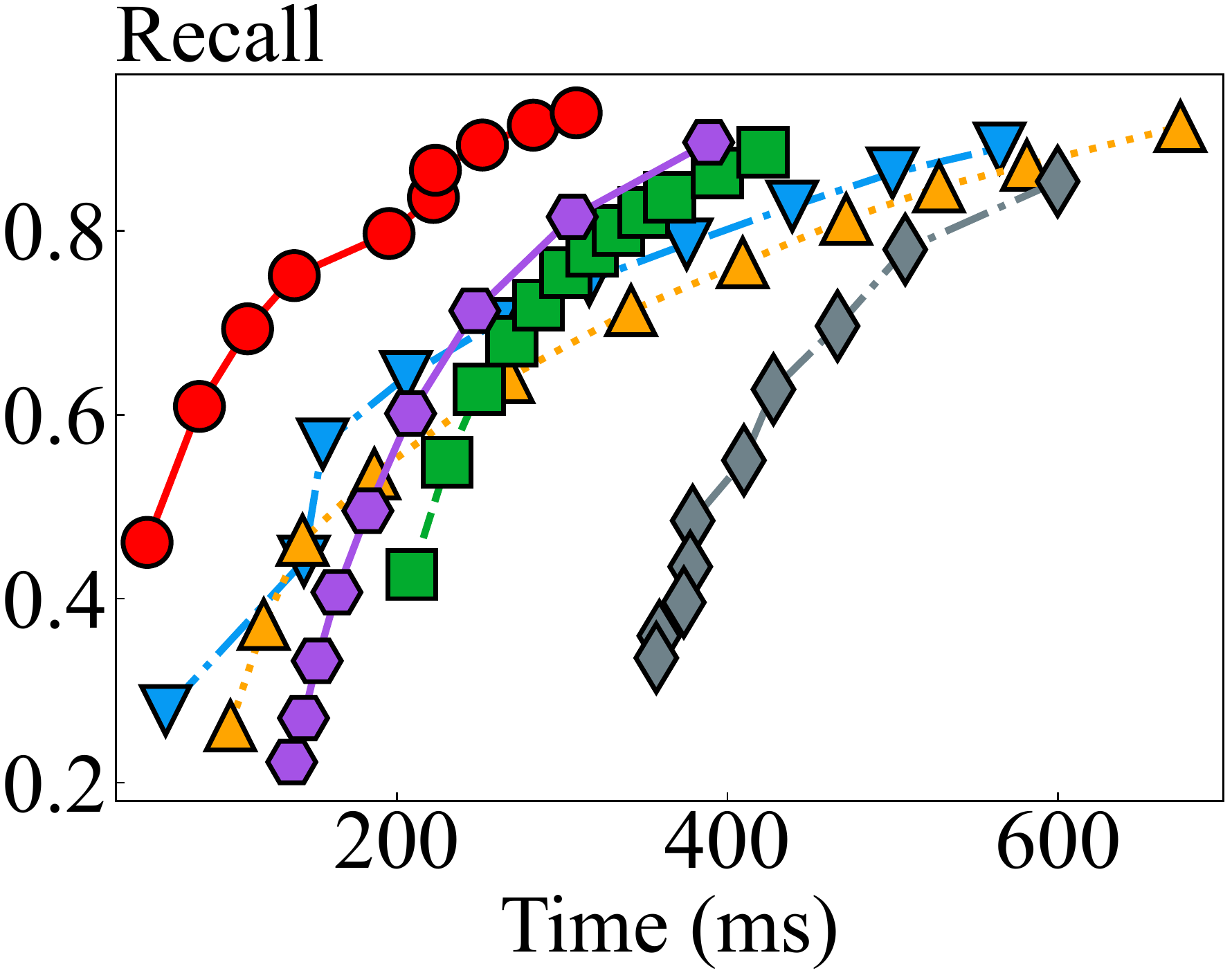}\vspace{0.1em}
		\end{minipage}
	}
	\subfigure[SIFT10M]{
		\begin{minipage}[c]{0.23\linewidth}
			\centering
			\includegraphics[width=1\textwidth]{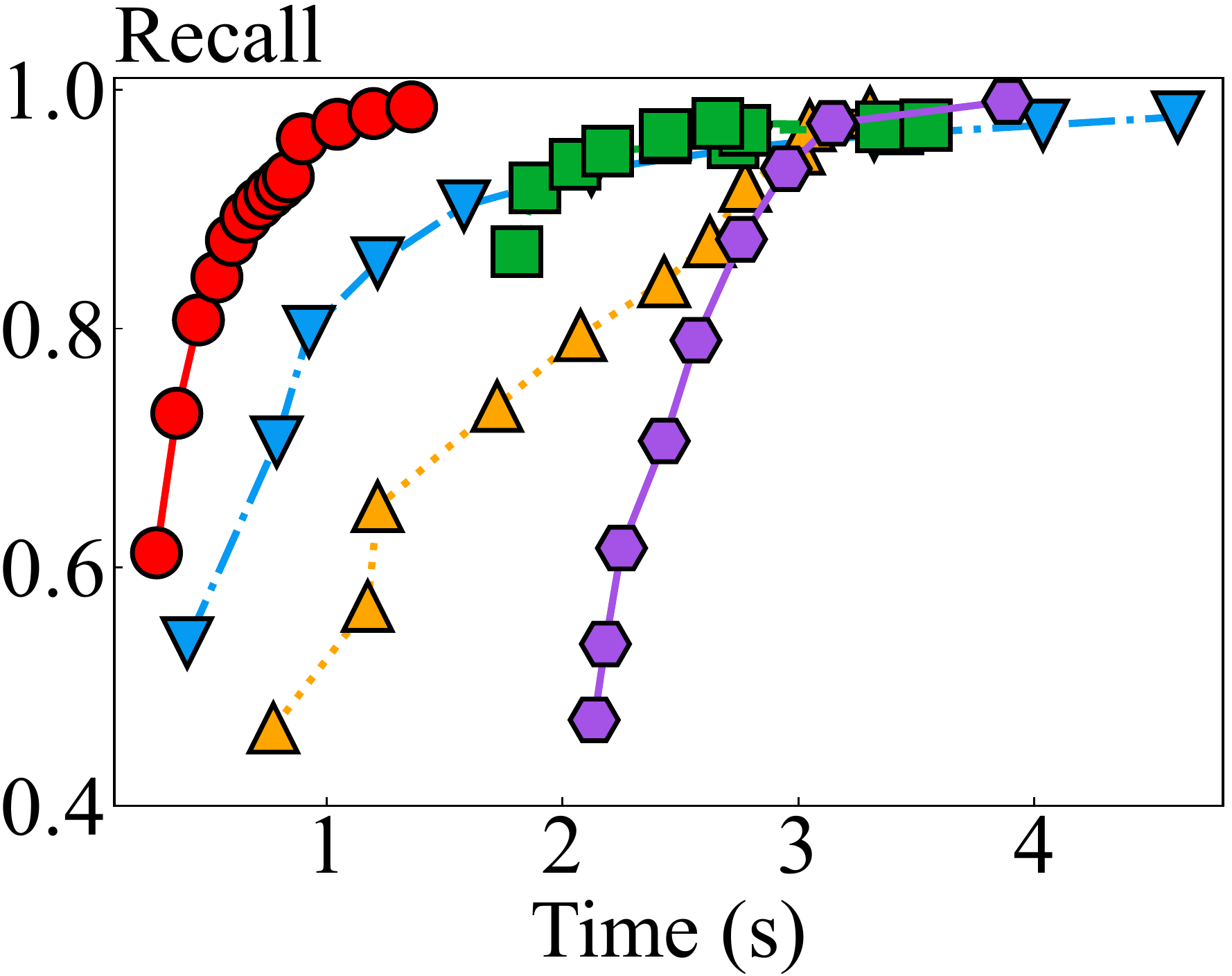}\vspace{0.1em}
		\end{minipage}
	}
	\subfigure[TinyImages80M]{
		\begin{minipage}[c]{0.23\linewidth}
			\centering
			\includegraphics[width=1\textwidth]{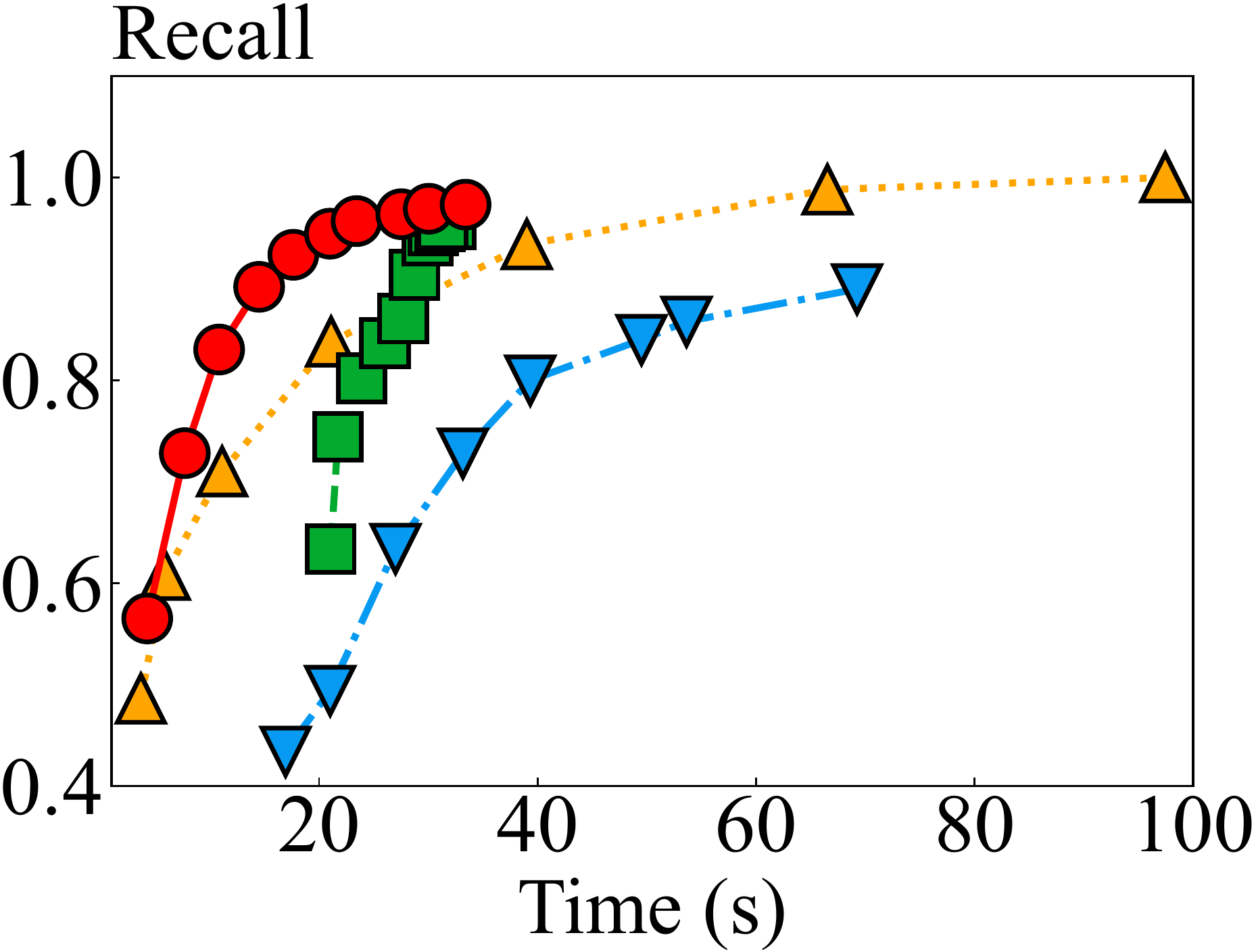}\vspace{0.1em}
		\end{minipage}
	}
	\vspace{-0.4em}
	\caption{Recall-Time Curves}\vspace{-1em}
	\label{expe:recall_time}
\end{figure*}
\begin{figure*}[!h]
	\centering
	\subfigure[Trevi]{
		\begin{minipage}[c]{0.23\linewidth}
			\centering
			\includegraphics[width=1\textwidth]{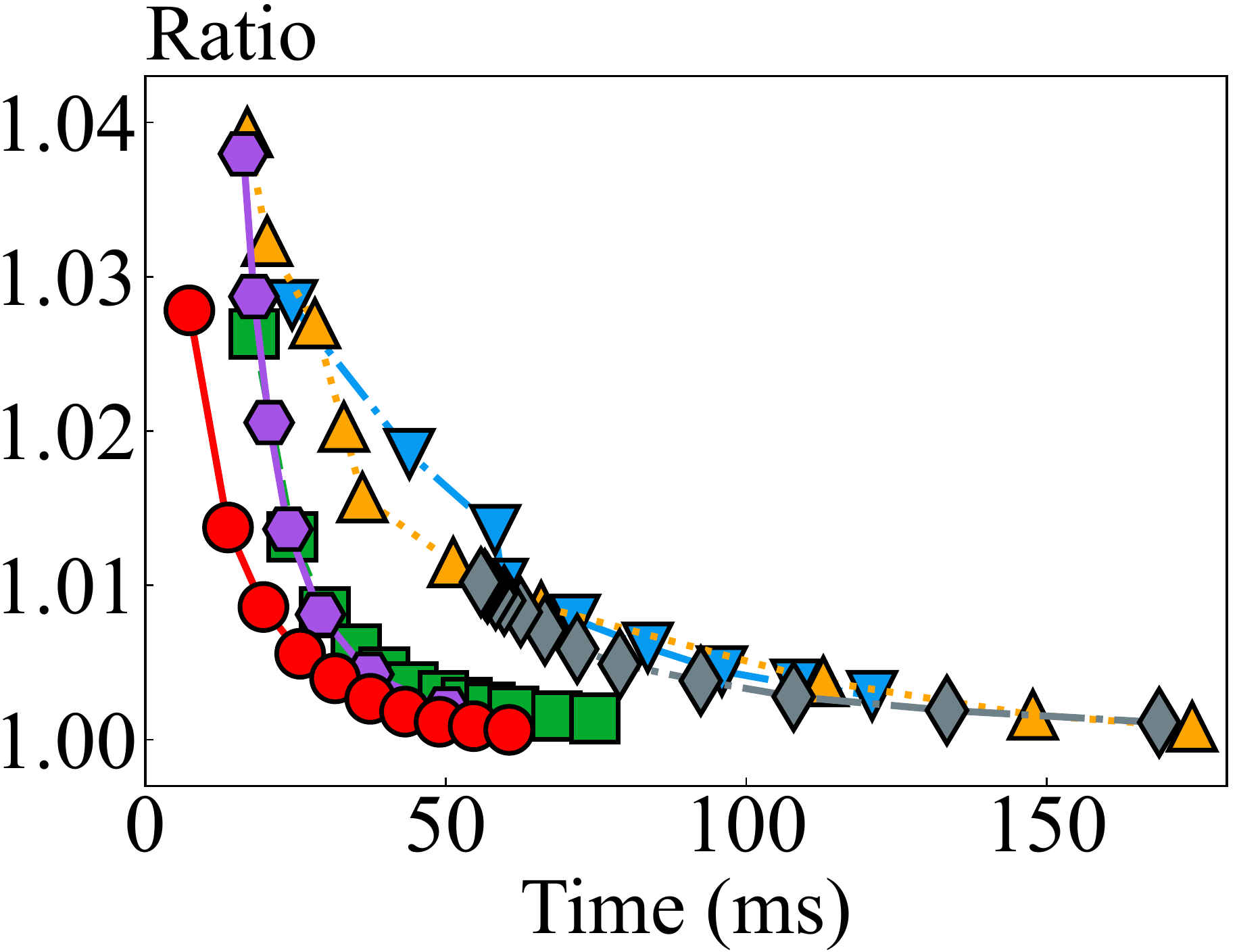}\vspace{0.1em}
		\end{minipage}
	}
	\subfigure[Gist]{
		\begin{minipage}[c]{0.23\linewidth}
			\centering
			\includegraphics[width=1\textwidth]{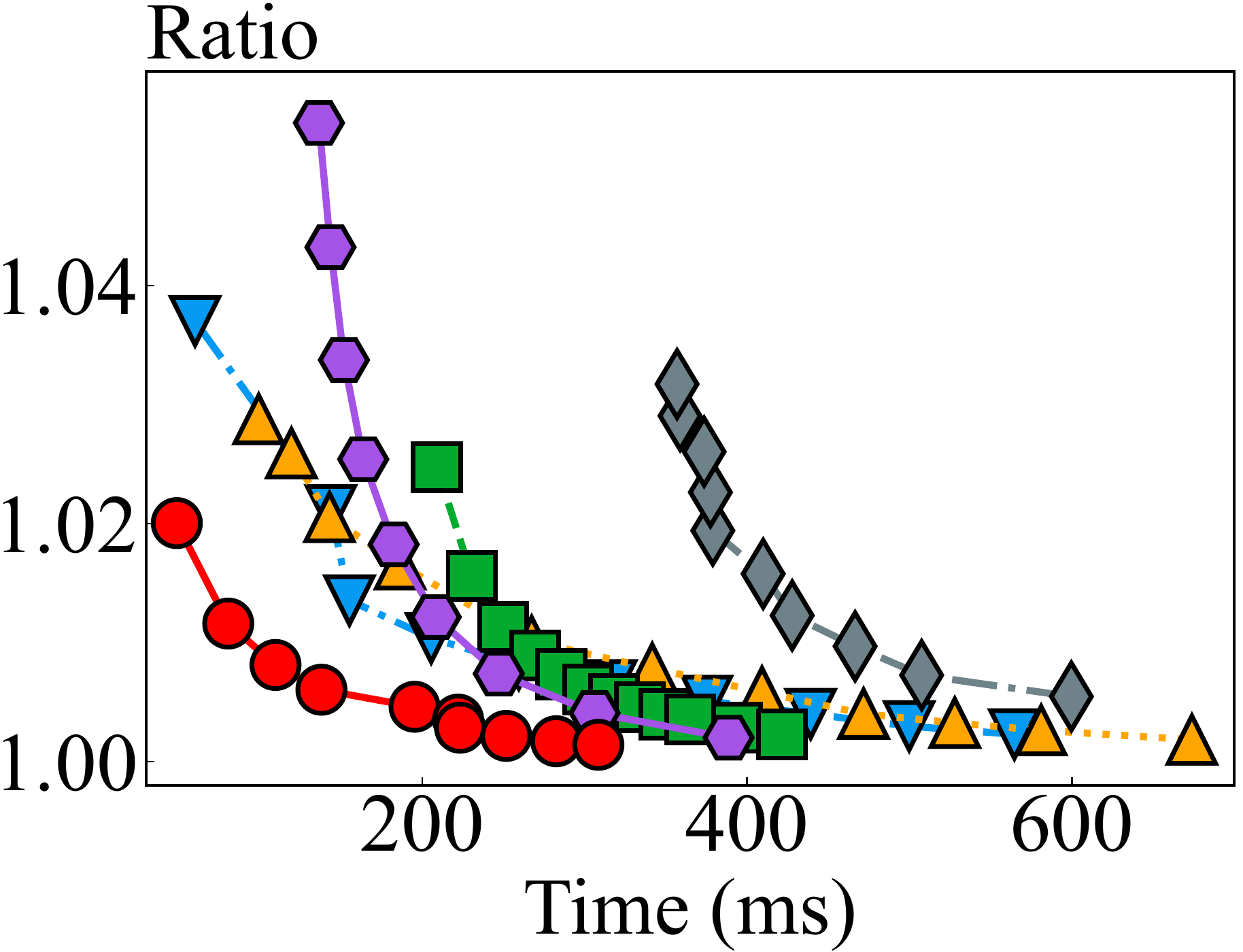}\vspace{0.1em}
		\end{minipage}
	}
	\subfigure[SIFT10M]{
		\begin{minipage}[c]{0.23\linewidth}
			\centering
			\includegraphics[width=1\textwidth]{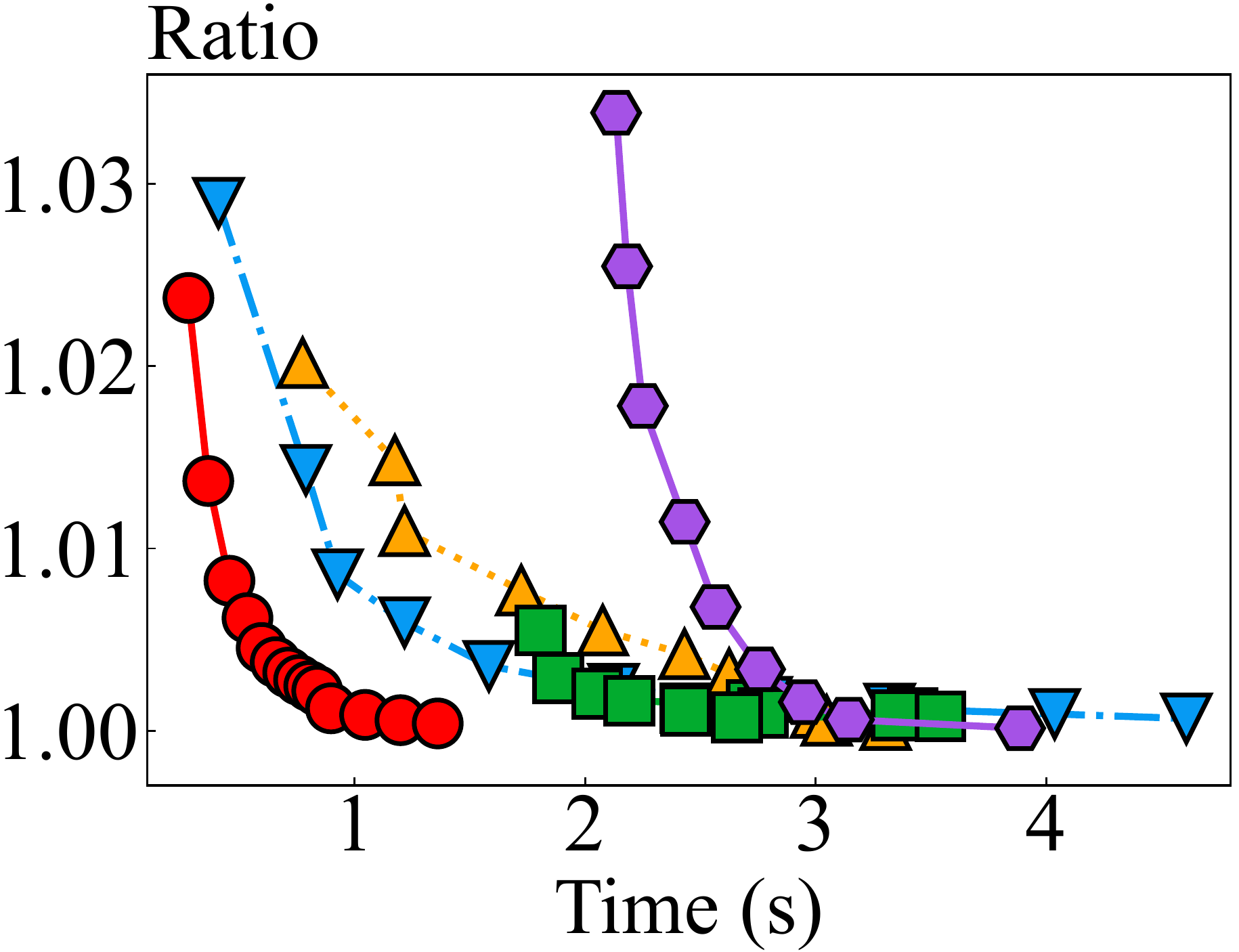}\vspace{0.1em}
		\end{minipage}
	}
	\subfigure[TinyImages80M]{
		\begin{minipage}[c]{0.23\linewidth}
			\centering
			\includegraphics[width=1\textwidth]{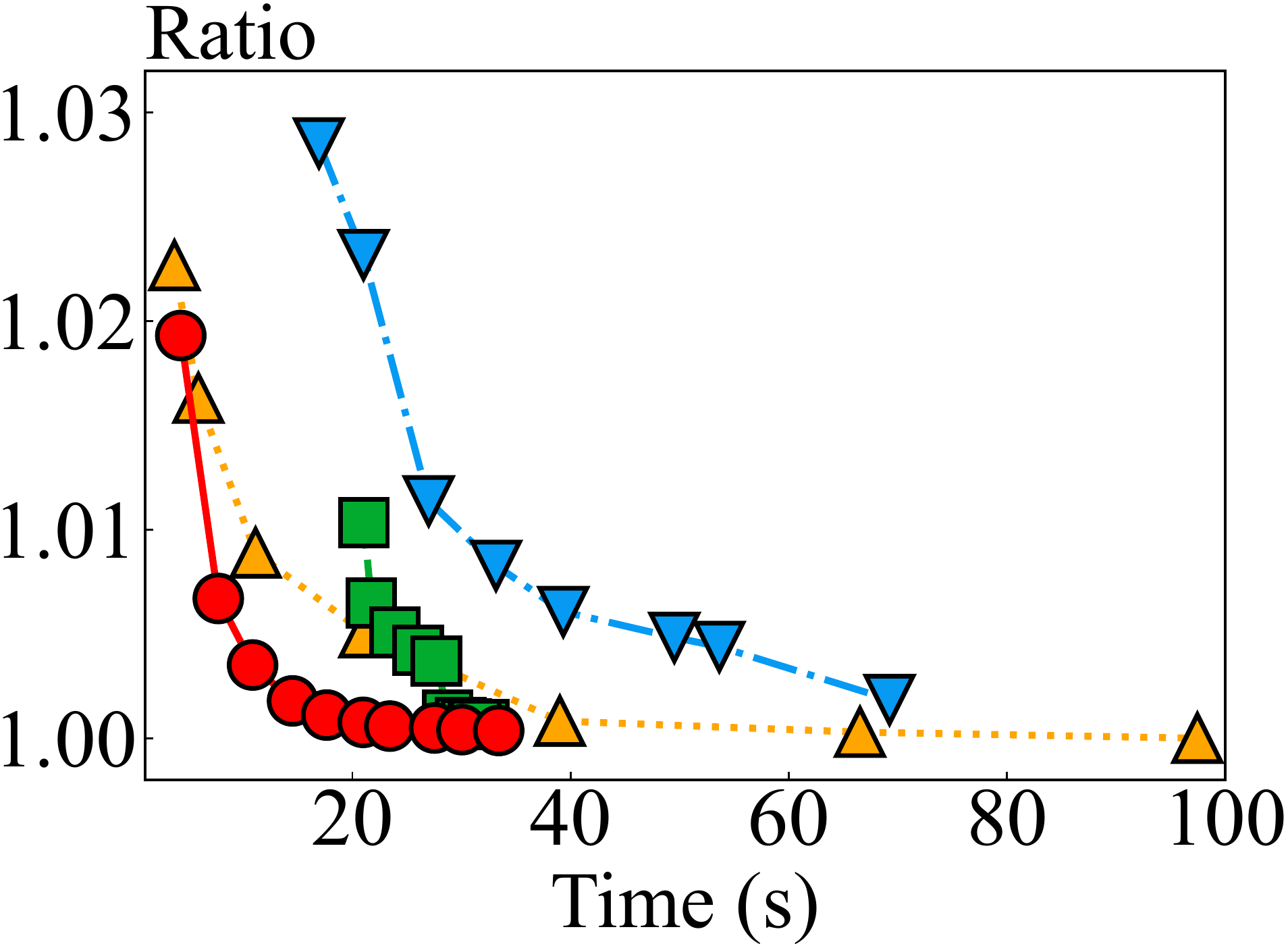}\vspace{0.1em}
		\end{minipage}
	}
	\vspace{-0.4em}
	\caption{Ratio-Time Curves}\vspace{-1em}
	\label{expe:ratio_time}
\end{figure*}


\subsubsection{\textbf{Effect of $k$}}

In this set of experiments, we study the query performance \cbl{in the default parameters when} varying $k$ in  $\{1, 10, 20, \cdots , 100\}$. 
Due to the space limitation, we only report recall and overall ratio on Gist and TinyImages80M in Figure \ref{expe:k}. \cthe{The} query time is omitted because the curve does not change much with $k$.
As expected, \cbl{DB-LSH} again yields the best accuracy,  \ie the \cbl{highest} recall and the \cbl{smallest} overall ratio. As $k$ increases, all algorithms have slightly worse accuracy because the average number of candidates checked for one result decreases, making the probability of missing some exact NNs slightly higher and thus affecting the accuracy.
At each $k$,
DB-LSH keeps outperforming the second best algorithms by an average of $5$-$10\%$ recall. 
Considering \cthe{the} smaller query time in DB-LSH (see Table \ref{tab:overview}), DB-LSH achieves better accuracy with higher efficiency for all $k$.

\subsubsection{\textbf{Recall-Time and OverallRatio-Time Curves}}

In this set of experiments, we plot the recall-time and overall ratio-time curves \cbl{by varying the approximation ratio $c$} for all algorithms to get a
\cbl{complete picture of the trade-off between the query efficiency and query accuracy}. Figures \ref{expe:recall_time}-\ref{expe:ratio_time} present the results on datasets Trevi, Gist, SIFT10M and TinyImages80M.
From the figures, we have \cbl{the} following observations:
(1) Among all algorithms, DB-LSH takes the least time to reach the same recall or overall ratio, which indicates DB-LSH achieves the best trade-off between accuracy and efficiency. The reason is that DB-LSH requires the fewest number of candidates to be accessed for a given accuracy among all algorithms. 
Compared to the second best algorithm on different datasets, DB-LSH can reduce the query time by $10$-$70$\% for given recall.
(2) DB-LSH always performs best, but the second best algorithms vary: R2LSH and PM-LSH on Trevi, R2LSH and FB-LSH on Gist, PM-LSH and FB-LSH on SIFT10M, PM-LSH and LCCS-LSH on TinyImages80M. R2LSH performs well on small datasets but becomes worse on larger datasets.
(3) As the query time increases, all algorithms return more accurate results, which is in line with the philosophy of LSH methods, \ie trading accuracy for efficiency.

\section{Conclusion} \label{sec:conclusion}
In this paper, we have proposed a novel LSH approach called DB-LSH for approximate nearest neighbor query processing in high-dimensional spaces with strong theoretical guarantees.
By decoupling the hashing and bucketing processes of the $(K,L)$-index and managing the projected points with a multi-dimensional index, DB-LSH can significantly reduce index size.
A query-centric dynamic bucketing strategy has been developed to avoid the hash boundary issue and thus generate high-quality candidates. We proved theoretically that DB-LSH can achieve a smaller $\rho^*$ bounded by $1/c^{4.746}$ when the initial bucket size is $w_0=4c^2$, which enables us to simultaneously reduce the query processing time and index space complexity of $(K, L)$-index methods.
A thorough range of experiments showed that DB-LSH comprehensively outperforms all competitor algorithms in terms of both efficiency and accuracy without the need for large indexes. DB-LSH can reduce the query time by an average of $40$\% compared to the second best competitors.
In the future, we plan to improve this work further by considering more efficient search strategies and early termination conditions.


\section*{Acknowledgment}
This research is partially supported by The Hong Kong Jockey Club Charities Trust.



\bibliographystyle{abbrv}
\bibliography{myRef}
\end{document}